\documentclass[11pt]{article}
\usepackage{lineno,amsmath,hyperref}
\usepackage{epstopdf}
\usepackage{geometry}
\usepackage{amsthm}
\usepackage{color}
\usepackage{amsmath}
\usepackage{amssymb}
\usepackage{algorithm}
\usepackage{algorithmic}
\usepackage{algorithm,float}
\usepackage{amsmath}
\usepackage{listings}
\usepackage{extarrows}
\usepackage{xcolor}
\usepackage{graphicx}
\usepackage{makecell}


\geometry{left=2.5cm,right=2.5cm,top=2.0cm,bottom=3.0cm}
\parskip=4pt

\newtheorem{theorem}{Theorem}[section]
\newtheorem{lemma}[theorem]{Lemma}

\newtheorem{property}[theorem]{Property}

\def\R{{\mathbb{R}}}

\def\N{{\mathbb{N}}}

\def\P{{\mathcal{P}}}
\def\C{{\mathcal{C}}}
\def\V{{\mathcal{V}}}

\def\M{{\mathcal{M}}}
\def\RC{{\mathcal{R}}}

\def\span{{\hbox{\rm{Span}}}}

\def\ord{{\rm{ord}}}
\def\d{{\hbox{\rm{d}}}}

\lstdefinelanguage{Maple}{
   keywords={if, while, do, else, end, for, from, to,then},
   keywordstyle=\color{blue}\bfseries,
   ndkeywords={class, export, boolean, throw, implements, import, this},
   ndkeywordstyle=\color{darkgray}\bfseries,
   identifierstyle=\color{black},
   sensitive=false,
   comment=[l]{//},
   morecomment=[s]{/*}{*/},
   commentstyle=\color{purple}\ttfamily,
   stringstyle=\color{red}\ttfamily,
   morestring=[b]',
   morestring=[b]"
}

\lstset{
   language=Maple,
   backgroundcolor=\color{lightgray},
   extendedchars=true,
   basicstyle=\footnotesize\ttfamily,
   showstringspaces=false,
   showspaces=false,
   numbers=left,
   numberstyle=\tiny,
   numbersep=9pt,
   tabsize=2,
   breaklines=true,
   showtabs=false,
   captionpos=b
}

\lstdefinelanguage{SOStools}{
   keywords={syms,sosprogram,monomials,sosineq,sossetobj,sossolve,sosgetsol,sospolyvar},
   keywordstyle=\color{blue}\bfseries,
   ndkeywords={syms,sosprogram,monomials,sosineq,sossetobj,sossolve,sosgetsol},
   ndkeywordstyle=\color{blue}\bfseries,
   identifierstyle=\color{black},
   sensitive=false,
   comment=[l]{//},
   morecomment=[s]{/*}{*/},
   commentstyle=\color{purple}\ttfamily,
   stringstyle=\color{red}\ttfamily,
   morestring=[b]',
   morestring=[b]"
}

\lstset{
   language=SOStools,
   backgroundcolor=\color{white},
   extendedchars=true,
   basicstyle=\footnotesize\ttfamily,
   showstringspaces=false,
   showspaces=false,
   numbers=none,
   numberstyle=\tiny,
   numbersep=9pt,
   tabsize=2,
   breaklines=true,
   showtabs=false,
   captionpos=b
}

\modulolinenumbers[5]
\makeatletter

\makeatother

\interdisplaylinepenalty=2500 

\hyphenation{op-tical net-works semi-conduc-tor}

\begin{document}
\title{A Generalization of the Concavity of R\'{e}nyi Entropy Power}
%
\author{Laigang Guo, Chun-Ming Yuan, Xiao-Shan Gao\\
KLMM, Academy of Mathematics and Systems Science\\
 Chinese Academy of Sciences, Beijing 100190, China\\
University of Chinese Academy of Sciences, Beijing 100049, China}
\date{\today}
\maketitle

\begin{abstract}
\noindent
Recently, Savar\'{e}-Toscani proved that the R\'{e}nyi entropy power of general probability densities solving the $p$-nonlinear heat equation in $\mathbb{R}^n$ is always a concave function of time, which extends Costa's concavity inequality for Shannon's entropy power to R\'{e}nyi entropies.
In this paper, we give a generalization of Savar\'{e}-Toscani's result
by giving a class of sufficient conditions of the parameters under which the concavity of the R\'{e}nyi entropy power is still valid. These conditions are quite general
and include the parameter range given by Savar\'{e}-Toscani as special cases.
Also, the conditions are obtained with a systematical approach.

\vskip10pt
\noindent
{\bf Keywords}. R\'{e}nyi entropy, entropy power inequality, nonlinear heat equation.
\end{abstract}



\section{Introduction}

The {\em $p$-th R\'{e}nyi entropy} \cite{Cover1992,Gardner2002} of a probability density function $f: \mathbb{R}^n\rightarrow \mathbb{R}$ is defined as
\begin{eqnarray}
\label{I1}
H_p(f):=\dfrac{1}{1-p}{\rm log}\int_{\mathbb{R}^n}f^p(x)dx,
\end{eqnarray}
for $0<p<+\infty$, $p\neq1$.
The $p$-th R\'{e}nyi entropy  power is given by
\begin{eqnarray}\label{I2}
N_p(f):={\rm exp}(\frac{\mu}{n}H_p(f)),
\end{eqnarray}
where $\mu$ is a real valued parameter. The R\'{e}nyi entropy for $p=1$ is defined as the limit of $H_p(f)$ as $p\rightarrow1$. It follows from definition \eqref{I1} that
\begin{eqnarray*}
H_1(f)=\underset{p\rightarrow1}{\rm lim}H_p(f)=-\int_{\mathbb{R}^n}f(x){\rm log}f(x)dx,
\end{eqnarray*}
which is Shannon's entropy. In this case, the proposed R\'{e}nyi entropy power of index $p=1, \mu=2$, given by \eqref{I2}, coincides with Shannon's entropy power
\begin{eqnarray}
\label{I3}
N_1(f):={\rm exp}(\dfrac{2}{n}H_1(f)).
\end{eqnarray}

Shannon's {\em entropy power inequality (EPI)}
is one of the most important information inequalities~\cite{Shannon1948}, which has many proofs, generalizations, and applications~\cite{Stam1959,Blachman1965,Lieb1978,VerduGuo2006,Rioul2011,Bergmans1974,Zamir1993,Liu2007,Wang2013}.
In particular, Costa presented a stronger version of the EPI~\cite{Costa1985}.

%
Let $X_t\triangleq X+N_n(0,tI)$ be the $n$-dimensional random vector
introduced by Costa~\cite{Costa1985,GYG2020,GYG2020M}
and $u(x_t)$ the {\em probability density} of  $X_t$,
which solves the heat equation in the whole space $\mathbb{R}^n$,
\begin{eqnarray}
\label{HeatEqu1}
\dfrac{\partial}{\partial t}u(x_t)=\Delta u(x_t).
\end{eqnarray}
{\em Costa's differential entropy} is defined to be
{\begin{equation}
\label{1.1}
H(u(x_t))=-\int_{\R^n} u(x_t)\log u(x_t)\d x_t.
\end{equation}}
Costa~\cite{Costa1985} proved that
the {\em Shannon entropy power}
$N(u)=\frac{1}{2\pi e}e^{(2/n)H(u)}$
is a concave function in $t$,
that is $(\d/\d t)N(u)\ge0$ and $(\d^2/\d^2 t)N(u)\le0$.
Several new proofs and generalizations for Costa's EPI were given~\cite{Dembo1989,Villani2000,Toscani2015}.

Savar\'{e}-Toscani~\cite{Savare2014} proved that the {\em concavity of entropy power} is a property which is not restricted to the Shannon entropy power \eqref{I3} in connection with the heat equation \eqref{HeatEqu1}, but it holds for the $p$-th R\'{e}nyi entropy power \eqref{I2}. They put it in connection with the solution to the {\em nonlinear heat equation}
\begin{eqnarray}
\label{HeatEqu2}
\dfrac{\partial}{\partial t}u(x_t)=\Delta u(x_t)^p
\end{eqnarray}
posed in the whole space $\mathbb{R}^n$ and $p\in\R_{>0}$.

In this paper, we give a generalization for {\em concavity of $p$-th R\'{e}nyi entropy power (CREP)}.
Precisely, we give a propositional logic formula $\Phi(n,p,\mu)$ such that
if $n\in\N,p,\mu\in\R$ satisfy this formula, then the CREP holds.
The condition $\Phi(n,p,\mu)$ extends the parameter's range of the CREP given by Savar\'{e}-Toscani~\cite{Savare2014} and contains much more cases.

The formula $\Phi$ is obtained using a systematic procedure
which can be considered as a parametric version of that given  in \cite{GYG2020,GYG2020M,Zhang2018}, where parameters $n,p,\mu$ exist in the formulas.
The procedure reduces the proof of the CREP to check the
semi-positiveness of a quadratic form  whose coefficients are polynomials in the parameters $n,p,\mu$.
In principle, a necessary and sufficient condition for the parameters to satisfy this property can be computed with the quantifier elimination~\cite{QE}.
In this paper, the problem is special and an explicit proof is given.

The rest of this paper is organized as follows.
In Section 2, we give the proof procedure and to prove concavity of entropy powers in the parametric case.
In Section 3, we present the generalized version of  CREP  using the proof procedure.
In Section 4, conclusions are presented.

\section{Proof Procedure}
\label{sec-2}
In this section, we give a procedure to prove the { CREP}.
To make the paper concise, we only give those steps that are needed in this paper.

\subsection{Notations}

Let $x_t=[x_{1,t},x_{2,t},\ldots,x_{n,t}]$ be a set of variables depending on $t$ and
$$
\begin{aligned}
\d^{(i)} x_t=\d x_{1,t}\d x_{2,t}\ldots \d x_{i-1,t}\d x_{i+1,t}\ldots \d x_{n,t}, i=1,2\ldots,n.
\end{aligned}
$$
Let $[n]_0 = \{0,1,\ldots,n\}$ and $[n] = \{1,\ldots,n\}$.
To simplify the notations, we use $u$ to denote $u(x_t)$ in the rest of the paper.
Denote
{\footnotesize
$$\mathcal{P}_{n} =\{\frac{\partial^h u}{\partial^{h_1} x_{1,t}\cdots \partial^{h_n} x_{n,t}}:
h = \sum_{i=1}^n h_i, h_i\in \N\}$$}
to be the set of all derivatives of $u$ with respect to the differential operators
$\frac{\partial}{\partial x_{i,t}},i=1,\ldots,n$
and
$$\RC=\R[n,p,\mu][\mathcal{P}_{n}]$$
 to be the set of polynomials in $\mathcal{P}_{n}$ with coefficients in $\R[n,p,\mu]$, where $n,p,\mu$ are  parameters.
For $v\in\mathcal{P}_{h,n}$, let $\ord(v)$   be the order of $v$.
For a monomial $\prod_{i=1}^r v_i^{d_i}$ with $v_i\in {\mathcal{P}}_{n}$,
its {\em degree}, {\em order}, and {\em total order}
are defined to be $\sum_{i=1}^r d_i$, $\max_{i=1}^r \ord(v_i)$,
 and  $\sum_{i=1}^r d_i\cdot \ord(v_i)$, respectively.
%

A polynomial in $\RC$ is called a $k$th-order
{\em differentially homogenous polynomial} or simply
a  $k$th-order {\em differential form},
if all its monomials have degree $k$ and total order $k$.
Let $\M_{k,n}$ be the set of all monomials which have degree $k$ and total order $k$.
Then the set of $k$th-order differential forms is
an $\R$-linear vector space generated by $\M_{k,n}$, which is denoted as $\span_\R(\M_{k,n})$.
We will use Gaussian elimination in $\span_\R(\M_{k,n})$ by treating the monomials as variables.
We always use the {\em lexicographic order for the monomials}  defined   in \cite{GYG2020,GYG2020M}.

\subsection{The proof procedure}
\label{sec-p2}
In this section, we give the  procedure to prove the {\em CREP}.
The property $\frac{\d}{\d t}N_p(u)\geq0$ can be easily proved~\cite{Savare2014}.
We focus on proving $\frac{\d^2}{\d^2 t}N_p(u)\le0$.
The procedure consists of four steps.

In step 1, we reduce the proof of {\em CREP} into the proof of an integral inequality,
as shown by the following lemma whose proof is given in section \ref{sec-p4}.

\begin{lemma}
\label{lm-pr1}
Proof of $\frac{\d^2}{\d^2 t}N_p(u)\le0$ can be reduced to show
{
\begin{equation}
\label{eq-tt1}
\begin{array}{ll}
\displaystyle{\int_{\R^n}u^{3p-6}E_{2,n}\d x_t} \ge0
\end{array}
\end{equation}}
under the condition $p\geq1-\frac{\mu}{n}$,
where $E_{2,n} =\sum_{a=1}^n\sum_{b=1}^n E_{2,n,a,b}$ is a $4$th-order differential form in $\R[n,p,\mu][{\P}_{2,n}]$
and
{
\begin{equation}
\label{eq-pm}
{\P}_{2,n} =\{
\frac{\partial^h u}{\partial^{h_1} x_{a,t}\partial^{h_{2}} x_{b,t}}:
h\in[3]_0; a,b\in[n]\}.
\end{equation}}
\end{lemma}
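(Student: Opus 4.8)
The plan is to carry out the standard differentiation-under-the-integral-sign computation for $\frac{\d}{\d t} N_p(u)$ and $\frac{\d^2}{\d t^2} N_p(u)$, using the nonlinear heat equation \eqref{HeatEqu2} to convert time derivatives into spatial ones, and then integrate by parts to remove all derivatives of order higher than desired. First I would set $I(t) = \int_{\R^n} u^p\,\d x_t$, so that $H_p(u) = \frac{1}{1-p}\log I(t)$ and $N_p(u) = \exp\!\big(\frac{\mu}{n}H_p(u)\big) = I(t)^{\mu/(n(1-p))}$. Writing $\alpha = \frac{\mu}{n(1-p)}$, a direct computation gives
\begin{equation*}
\frac{\d}{\d t} N_p(u) = \alpha\, I^{\alpha-1} I',\qquad
\frac{\d^2}{\d t^2} N_p(u) = \alpha\, I^{\alpha-2}\big(I\,I'' + (\alpha-1)(I')^2\big),
\end{equation*}
so that the sign of $\frac{\d^2}{\d t^2}N_p(u)$ is governed (up to the sign of $\alpha\, I^{\alpha-2}$, which is controlled under $p \ge 1-\frac{\mu}{n}$) by the quantity $I\,I'' + (\alpha-1)(I')^2$. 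The first task is therefore to express $I'$ and $I''$ as integrals of differential forms in $u$: using $u_t = \Delta u^p$ and integrating by parts, $I' = p\int u^{p-1}\Delta u^p\,\d x_t = -p\int \nabla u^{p-1}\cdot\nabla u^p\,\d x_t$, which after expansion is an integral of a second-order differential form weighted by a power of $u$; iterating once more (differentiating again in $t$, substituting $u_t = \Delta u^p$, and integrating by parts to lower the order) produces $I''$ as an integral of a fourth-order differential form times a power of $u$.

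Next I would combine these into $I\,I'' + (\alpha - 1)(I')^2$. The product of integrals $I\cdot I''$ and $(I')^2$ are each double integrals over $\R^n\times\R^n$; the key manipulation is to use the homogeneity structure so that, after an appropriate normalization (dividing through by a suitable power of $I$ or rescaling), the inequality $\frac{\d^2}{\d t^2}N_p(u)\le 0$ is equivalent to a single-integral inequality of the form $\int_{\R^n} u^{3p-6} E_{2,n}\,\d x_t \ge 0$, where the weight $u^{3p-6}$ records the three factors of $u^{p-2}$-type terms arising from the second spatial derivative of $u^p$ acting on $u$ and the product structure. The coefficients of the differential form $E_{2,n}$ are then polynomials in $n, p, \mu$, arising from the chain-rule constants in differentiating powers of $u$ and from the constant $\alpha - 1 = \frac{\mu}{n(1-p)} - 1 = \frac{\mu - n(1-p)}{n(1-p)}$. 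One must track that, under the hypothesis $p \ge 1 - \frac{\mu}{n}$, the prefactors have the right sign so that the remaining inequality is genuinely $\ge 0$ rather than $\le 0$.

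The main obstacle — and the place where care is needed — is the bookkeeping of the integration by parts: one must ensure that all boundary terms vanish (this is where the decay assumptions on $u$, implicit in the setup, are used) and, more importantly, that the resulting integrand is genuinely a fourth-order \emph{differential form} in the sense defined in Section \ref{sec-2}, i.e.\ every monomial has degree $4$ and total order $4$ in the derivatives collected in $\P_{2,n}$. Because $u^p$ is differentiated twice spatially and the Laplacian is a sum over coordinate directions, the cross terms naturally organize into the double sum $\sum_{a,b} E_{2,n,a,b}$ where $E_{2,n,a,b}$ involves only derivatives of $u$ with respect to $x_{a,t}$ and $x_{b,t}$; verifying that no mixed third-direction derivatives survive and that the total-order count is exactly $4$ after every integration by parts is the delicate part. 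Once this reduction is in hand, the remaining steps (Gaussian elimination in $\span_\R(\M_{4,n})$ to bring $E_{2,n}$ to a canonical form, and checking semi-positiveness of the resulting quadratic form) are deferred to the later sections and are not part of this lemma.
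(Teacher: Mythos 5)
Your setup (writing $N_p(u)=I^{\alpha}$ with $I=\int u^p\,\d x_t$, $\alpha=\tfrac{\mu}{n(1-p)}$, and reducing to the sign of $I\,I''+(\alpha-1)(I')^2$) matches the paper's starting point, but the proposal has a genuine gap at the decisive step. You assert that the product-of-integrals terms $I\cdot I''$ and $(I')^2$ can be collapsed to a single-integral inequality ``using the homogeneity structure'' and ``an appropriate normalization (dividing through by a suitable power of $I$).'' No such normalization exists: $(I')^2=\bigl(\int u^{p-1}\Delta u^p\,\d x_t\bigr)^2$ is the square of an integral and is not the integral of any local expression in $u$ and its derivatives. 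The paper's actual mechanism is the Cauchy--Schwarz inequality
\begin{equation*}
\Bigl(\int \Delta u^{p-1}\,u^p\,\d x_t\Bigr)^2\;\le\;\int u^p\,\d x_t\;\int(\Delta u^{p-1})^2u^p\,\d x_t ,
\end{equation*}
applied after rewriting $\int u^{p-1}\Delta u^p\,\d x_t=\int \Delta u^{p-1}\,u^p\,\d x_t$ by parts. This replaces $(I')^2$ by $I$ times a single integral of a local quantity, and the common factor $I$ then cancels, yielding $I_{2,n}\le \frac{1}{\int u^p\,\d x_t}\int F_{2,n}\,\d x_t$ with $F_{2,n}$ local. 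Because this step is an inequality rather than an identity, the lemma only claims a reduction (a sufficient condition), not the equivalence you state; the paper's conclusion explicitly lists this Cauchy--Schwarz step as one reason the final criterion is merely sufficient.

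Relatedly, you misattribute the role of the hypothesis $p\ge 1-\tfrac{\mu}{n}$: it is not there to control the sign of the prefactor $\alpha I^{\alpha-2}$, but to guarantee that the coefficient $\frac{(\mu-n(1-p))p^2}{n(1-p)^2}$ multiplying the squared term is nonnegative, so that substituting the Cauchy--Schwarz upper bound for $(I')^2$ still produces an upper bound for $I_{2,n}$. Without identifying this step, the reduction to $\int_{\R^n}u^{3p-6}E_{2,n}\,\d x_t\ge0$ cannot be carried out, so the proposal as written would fail. The remaining bookkeeping you describe (expanding $F_{2,n}=\sum_{a,b}\mathcal{T}_{a,b}$, extracting the weight $u^{3p-6}$, and checking that each $T_{a,b}$ is a degree-$4$, total-order-$4$ form in $\P_{2,n}$) is consistent with the paper and is routine once the Cauchy--Schwarz reduction is in place.
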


Then the problem {$\frac{\d^2}{\d^2t}N_p(u)\leq0$} can be transformed to {  $\int u^{3p-6}E_{2,n}\d x_t\geq0$}.
Thus, Lemma \ref{lm-pr1} is proved.

In step 2, we compute the constraints which are relations satisfied by the probability density $u$ of $X_t$.
Since $E_{2,n}$ in \eqref{eq-tt1} is a $4$th-order differential form, we need only
the constraints which are $4$th-order differential forms.
A $4$th-order differential form $R$ is called an {\em equational or inequality  constraint} if
\begin{equation}
\label{eq-c1}
{\int_{\R^n}u^{3p-6}R\d x_t} =0 \hbox{ or } {\int_{\R^n}u^{3p-6}R\d x_t} \ge0.
\end{equation}
The method to compute the constraints is given in section \ref{sec-2oc}.
Suppose that the equational and inequality constraints are respectively
\begin{equation}
\label{eq-cons}
\C_{E} =\{R_i,\,|\, i=1,\ldots,N_1\}\hbox{ and }
\C_{I} =\{I_i,\,|\, i=1,\ldots,N_2\}.
\end{equation}

In step 3, we find a propositional formula $\Phi(n,p,\mu)$ such that when $n,p,\mu\in\R$ satisfy $\Phi$,
\begin{eqnarray}
\label{eq-S1}
\exists c_j, e_i\in\R, s.t. \,\,  E_{2,n}-\sum_{i=1}^{N_1} e_i R_i -\sum_{j=1}^{N_2} c_j I_j =S\ge0
\hbox{ and } c_j\ge0,j=1,\ldots,n_2.
\end{eqnarray}
Details of this step and the formula $\Phi(n,p,\mu)$ are given in section \ref{sec-repi}.

To summarize the proof procedure, we have
\begin{theorem}
\label{th-m1}
The {\em CREP} is true if  $\Phi(n,p,\mu)$ is valid.
\end{theorem}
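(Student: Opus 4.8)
The plan is to assemble Theorem~\ref{th-m1} directly from the three reduction steps already set up in this section, treating it as a synthesis rather than a fresh argument. First I would recall the chain of reductions: by Lemma~\ref{lm-pr1}, establishing $\frac{\d^2}{\d^2 t}N_p(u)\le 0$ is equivalent (under $p\ge 1-\frac{\mu}{n}$) to proving the integral inequality $\int_{\R^n}u^{3p-6}E_{2,n}\,\d x_t\ge 0$. So it suffices to show that this integral inequality follows whenever $\Phi(n,p,\mu)$ holds. The bridge is the algebraic identity \eqref{eq-S1}: if $\Phi(n,p,\mu)$ is valid, then there exist real constants $e_i$ and nonnegative constants $c_j$ such that $E_{2,n} = S + \sum_{i} e_i R_i + \sum_j c_j I_j$ with $S\ge 0$ as a differential form (i.e.\ pointwise nonnegative, or a sum of squares).

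Next I would integrate this identity against $u^{3p-6}$ over $\R^n$ and use linearity of the integral:
\begin{equation*}
\int_{\R^n}u^{3p-6}E_{2,n}\,\d x_t
= \int_{\R^n}u^{3p-6}S\,\d x_t
+ \sum_{i=1}^{N_1} e_i\int_{\R^n}u^{3p-6}R_i\,\d x_t
+ \sum_{j=1}^{N_2} c_j\int_{\R^n}u^{3p-6}I_j\,\d x_t.
\end{equation*}
By \eqref{eq-c1}, each $R_i$ is an equational constraint, so $\int_{\R^n}u^{3p-6}R_i\,\d x_t = 0$, killing the entire middle sum regardless of the sign of $e_i$. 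Each $I_j$ is an inequality constraint, so $\int_{\R^n}u^{3p-6}I_j\,\d x_t\ge 0$, and since $c_j\ge 0$, every term in the last sum is nonnegative. Finally, since $u\ge 0$ (it is a probability density) and $S$ is a nonnegative differential form, the factor $u^{3p-6}\ge 0$ forces $\int_{\R^n}u^{3p-6}S\,\d x_t\ge 0$. Summing, the left-hand side is nonnegative, which is exactly \eqref{eq-tt1}. Combined with Lemma~\ref{lm-pr1} this yields $\frac{\d^2}{\d^2 t}N_p(u)\le 0$; together with the already-known $\frac{\d}{\d t}N_p(u)\ge 0$ from \cite{Savare2014}, the CREP holds.

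The main obstacle is not in this synthesis step, which is essentially bookkeeping, but lies upstream: producing the explicit decomposition \eqref{eq-S1}, i.e.\ carrying out step~3 and exhibiting the formula $\Phi(n,p,\mu)$ together with the witnesses $e_i, c_j$. That requires computing the constraint sets $\C_E, \C_I$ (step~2), performing Gaussian elimination in $\span_\R(\M_{4,n})$ to rewrite $E_{2,n}$ modulo the equational constraints, and then certifying semi-positiveness of the resulting quadratic form whose coefficients are polynomials in $n,p,\mu$ — in effect a parametric semidefinite/quantifier-elimination problem. One subtlety to handle carefully is that $S\ge 0$ must be verified as a statement about an arbitrary smooth positive density: $S$ should be a sum of squares of differential forms (so that pointwise nonnegativity is manifest), not merely nonnegative on the specific trajectory. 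All of that difficulty, however, is deferred to Section~\ref{sec-repi}; for Theorem~\ref{th-m1} itself, assuming the output of step~3, the proof is the short integration argument above.
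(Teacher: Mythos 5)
Your proposal is correct and follows essentially the same route as the paper: integrate the decomposition \eqref{eq-S1} against $u^{3p-6}$, kill the equational-constraint terms, drop the nonnegative inequality-constraint terms, and use pointwise nonnegativity of $S$ together with Lemma~\ref{lm-pr1}. Your added remarks — that the real work lies in producing $\Phi$ and the witnesses $e_i,c_j$ in Section~\ref{sec-repi}, and that $S\ge0$ must hold as a form (e.g.\ a sum of squares) rather than only along the particular solution — are accurate and consistent with the paper's treatment.
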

\begin{proof}
By Lemma \ref{lm-pr1}, we have the following proof:
{
\begin{equation}
\label{eq-proof}
\begin{aligned}
&\int_\R u^{3p-6}{{E}_{2,n}}\d x_t\\
\overset{\eqref{eq-S1}}{=}&\int_\R u^{3p-6}({\sum_{i=1}^{N_1} e_i R_i+\sum_{j=1}^{N_2} c_j I_j+ S})\d x_t\\[0.1cm]
\overset{S1}{=}&\int_\R u^{3p-6}(\sum_{j=1}^{N_2} c_j I_j+ S)\d x_t\\[0.1cm]
\overset{S2}{\geq} &\int_\R u^{3p-6}{S}\d x_t
\overset{S3}{\geq}0.
\end{aligned}
\end{equation}}
Equality S1 is true, because $R_i$ is a  equational constraint.
Inequality S2 is true, because $I_j$ is an inequality constraint.
Inequality S3 is true, because $S\ge0$ under the condition $\Phi(n,p,\mu)$.
\end{proof}

\subsection{The equational constraints}
\label{sec-2oc}

In this section, we show how to find the second order  equational constraints.
%
%
A {\em second order equational constraint} is a $4$th-order differential form in $\R[n,p,\mu][\P_{2,n}]$
such that $\int_{\R^n}\ u^{3p-6}{R}\ \d x_t=0$.

We introduce the following notations
{\footnotesize
\begin{equation}
\label{eq-v2}
{\mathcal V}_{a,b} =\{ \frac{\partial^h u}{\partial^{h_1} x_{a,t} \partial^{h_2} x_{b,t}}:
h=h_1+h_2\in[3]_0\},
\end{equation}}
where $a,b$ are variables taking values in $[n]$.
Then $\P_{2,n}=\cup_{a=1}^n\cup_{b=1}^n \V_{a,b}$.

We need the following property.
\begin{property}\label{lemma4}
Let $a,r,m_i,k_i \in \N_{>0}$ and $u^{(m_i)}$ an $m_i$th-order derivative of $u$.
If $u(x_t)$ is a smooth, strictly positive and rapidly decaying probability density, then
{\footnotesize
\begin{equation}\label{lem4}
\begin{aligned}
\displaystyle{\int_{-\infty}^{\infty}\ldots\int_{-\infty}^{\infty}u^{3p-2}\left[\prod\limits_{i=1}^{r}
\frac{[u^{(m_i)}]^{k_i}}{u^{k_i}}\right]\Bigg|_{x_{a,t}=-\infty}^{\infty}\d^{(a)}} x_t=0,
\end{aligned}
\end{equation}}
with {\small $\sum_{i=1}^{r}k_im_i=4,\ \sum_{i=1}^{r}k_i=4$}.
\end{property}
When $p\geq2$, Property \ref{lemma4} follows from  \cite{Treves2016}.
While $0<p<2, p\neq1$, we make the assumption that  $u(x_t)$ also satisfies Property \ref{lemma4}.

Using Property \ref{lemma4}, we can compute the second order equational constraints
using the method given in \cite{GYG2020,GYG2020M}:
%
%
{
\begin{equation}
\label{eq-2cons}
\C_{2,n}=\{R_{i,a,b}\,:\,i=1,\ldots,28\}\subset\R[n,p,\mu][{\mathcal V}_{a,b}],
\end{equation}}
where $R_{i,a,b}$  can be found in the Appendix.
Note that $a,b$ are variables taking values in $[n]$.

We use an example to show how to obtain these constraints.
Starting from a monomial
$u\frac{\partial^2u}{\partial^2x_{a,t}}(\frac{\partial u}{\partial x_{a,t}})^2$
with degree 4 and total order 4,
by integral by parts, we have
\begin{equation}\begin{array}{ll}
\int u^{3p-6}u\frac{\partial^2u}{\partial^2x_{a,t}}(\frac{\partial u}{\partial x_{a,t}})^2\d x_t\\
=\int_{-\infty}^{\infty}\ldots\int_{-\infty}^{\infty} [u^{3p-5}\frac{\partial u}{\partial x_{a,t}}(\frac{\partial u}{\partial x_{a,t}})^2]\Big|_{x_{a,t}=-\infty}^{\infty}\d^{(a)} x_t\\
-\int \frac{\partial u}{\partial x_{a,t}}[\frac{\partial}{\partial x_{a,t}}(u^{3p-5}(\frac{\partial u}{\partial x_{a,t}})^2)] \d x_t\\
\overset{\eqref{lem4}}{=}-\int \frac{\partial u}{\partial x_{a,t}}[\frac{\partial}{\partial x_{a,t}}(u^{3p-5}(\frac{\partial u}{\partial x_{a,t}})^2)] \d x_t.
\end{array}
\end{equation}
Then,
{
\begin{equation}\begin{aligned}
\int u^{3p-6}u\frac{\partial^2u}{\partial^2x_{a,t}}(\frac{\partial u}{\partial x_{a,t}})^2
+\frac{\partial u}{\partial x_{a,t}}[\frac{\partial}{\partial x_{a,t}}(u^{3p-5}(\frac{\partial u}{\partial x_{a,t}})^2)]\d x_t\\
=\int u^{3p-6}[3p(\frac{\partial u}{\partial x_{a,t}})^4+3u\frac{\partial^2u}{\partial^2x_{a,t}}(\frac{\partial u}{\partial x_{a,t}})^2-5(\frac{\partial u}{\partial x_{a,t}})^4] \d x_t=0.
\end{aligned}\end{equation}
}
We obtain a 2th-order constraint:
$R_{1,a,b}=3p(\frac{\partial u}{\partial x_{a,t}})^4+3u\frac{\partial^2u}{\partial^2x_{a,t}}(\frac{\partial u}{\partial x_{a,t}})^2-5(\frac{\partial u}{\partial x_{a,t}})^4$. The other 27 constraints in $\C_{2,n}$ are obtained in the same way.

\subsection{Proof of Lemma \ref{lm-pr1}}
\label{sec-p4}

%
%
%
We first prove several lemmas.
\begin{lemma}
{
\begin{equation}\begin{array}{ll}
\frac{\d H_p(u)}{\d t}=\frac{p}{1-p}\frac{\int u^{p-1}\Delta u^p\d x_t}{\int u^p \d x_t},\label{dh1}\\
\end{array}\end{equation}
\begin{equation}
\frac{\d^2H_p(u)}{\d^2t}=\frac{p}{1-p}\frac{\frac{\partial}{\partial t}(\int u^{p-1}\frac{\partial u}{\partial t}\d x_t)\int u^p\d x_t-p(\int u^{p-1}\frac{\partial u}{\partial t}\d x_t)^2}{(\int u^p\d x_t)^2}.\label{dh2}
\end{equation}}
\end{lemma}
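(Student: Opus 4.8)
The plan is to derive both formulas by differentiating the definition $H_p(u)=\frac{1}{1-p}\log\int_{\R^n}u^p\,\d x_t$ directly: differentiate under the integral sign and then insert the evolution equation \eqref{HeatEqu2}. First I would introduce the shorthands $B(t)=\int_{\R^n}u^p\,\d x_t$ and $A(t)=\int_{\R^n}u^{p-1}\frac{\partial u}{\partial t}\,\d x_t$. Differentiating $H_p$ once and applying the chain rule inside the integral gives $\frac{\d H_p(u)}{\d t}=\frac{1}{1-p}\frac{B'(t)}{B(t)}$ with $B'(t)=\int_{\R^n}p\,u^{p-1}\frac{\partial u}{\partial t}\,\d x_t=pA(t)$; substituting $\frac{\partial u}{\partial t}=\Delta u^p$ from \eqref{HeatEqu2} into $A(t)$ then yields \eqref{dh1} at once.

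For \eqref{dh2} I would apply the quotient rule to $\frac{\d H_p(u)}{\d t}=\frac{p}{1-p}\frac{A(t)}{B(t)}$, obtaining $\frac{\d^2H_p(u)}{\d^2 t}=\frac{p}{1-p}\frac{A'(t)B(t)-A(t)B'(t)}{B(t)^2}$, and then use $B'(t)=pA(t)$ once more to rewrite $A(t)B'(t)=pA(t)^2$. Unfolding $A'(t)=\frac{\partial}{\partial t}\bigl(\int_{\R^n}u^{p-1}\frac{\partial u}{\partial t}\,\d x_t\bigr)$ and restoring $A(t)$ and $B(t)$ to integral form reproduces \eqref{dh2} verbatim.

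The only genuine point requiring care---and the part I expect to be the main, if mild, obstacle---is the justification of the two interchanges of $\frac{\d}{\d t}$ with $\int_{\R^n}(\cdot)\,\d x_t$, i.e.\ the use of Leibniz's rule. These are licensed by the standing regularity hypotheses recalled around Property \ref{lemma4}: $u(x_t)$ is smooth, strictly positive, and decays rapidly together with its derivatives, so the $t$-derivatives of $u^p$ and of $u^{p-1}\frac{\partial u}{\partial t}$ are dominated, locally uniformly in $t$, by integrable functions, while strict positivity keeps the negative powers $u^{p-1},u^{p-2}$ that appear after differentiation well defined even when $p<2$. I would also note that $B(t)>0$, so all the quotients above make sense. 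No integration by parts or cancellation of boundary terms is needed for this lemma---those enter only later, in the construction of the constraints in Section \ref{sec-2oc}---so once the Leibniz steps are granted the argument reduces to the chain and quotient rules.
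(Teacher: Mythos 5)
Your proposal is correct and follows essentially the same route as the paper: differentiate the definition under the integral sign, use $\frac{\partial}{\partial t}\int u^p\,\d x_t=p\int u^{p-1}\frac{\partial u}{\partial t}\,\d x_t$ together with the evolution equation \eqref{HeatEqu2} for \eqref{dh1}, and the quotient rule plus the same identity for \eqref{dh2}. Your added remarks on justifying the Leibniz interchanges are a reasonable supplement but not a departure from the paper's argument.
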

\begin{proof}
By the definition of $p$-R\'{e}nyi entropy \eqref{I1}, we have
{\scriptsize
\begin{equation*}\begin{aligned}
\frac{\d H_p(u)}{\d t}&=\frac{p}{1-p}\frac{\int u^{p-1}\frac{\partial u}{\partial t}\d x_t}{\int u^p \d x_t}
=\frac{p}{1-p}\frac{\int u^{p-1}\Delta u^p\d x}{\int u^p \d x},\\
\frac{\d^2 H_p(u)}{\d^2t}&=\frac{p}{1-p}\frac{\frac{\partial}{\partial t}(\int u^{p-1}\frac{\partial u}{\partial t}\d x_t)\int u^p\d x_t-\int u^{p-1}\frac{\partial u}{\partial t}\d x_t\frac{\partial}{\partial t}(\int u^{p}\d x_t)}{(\int u^p\d x_t)^2}\\
&\!\!\!\!=\frac{p}{1-p}\frac{\frac{\partial}{\partial t}(\int u^{p-1}\frac{\partial u}{\partial t}\d x_t)\int u^p\d x_t-\int u^{p-1}\frac{\partial u}{\partial t}\d x_t\int pu^{p-1}\frac{\partial u}{\partial t}\d x_t}{(\int u^p\d x_t)^2}\\
&\!\!\!\!=\frac{p}{1-p}\frac{\frac{\partial}{\partial t}(\int u^{p-1}\frac{\partial u}{\partial t}\d x_t)\int u^p\d x_t-p(\int u^{p-1}\frac{\partial u}{\partial t}\d x_t)^2}{(\int u^p\d x_t)^2}.
\end{aligned}\end{equation*}
}
\end{proof}
\begin{lemma}
We have
{\footnotesize
\begin{eqnarray}\label{T1}
\int u^{p-1}\Delta u^p\d x_t=\int\Delta u^{p-1} u^p\d x_t.
\end{eqnarray}}
\end{lemma}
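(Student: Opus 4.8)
The plan is to prove \eqref{T1} by recognizing it as an instance of Green's second identity (self-adjointness of the Laplacian) applied to the two auxiliary functions $u^{p-1}$ and $u^{p}$. Since $u$ is smooth and strictly positive, both $u^{p-1}$ and $u^{p}$ are smooth functions of $x_t$, and under the decay hypotheses imposed on $u$ (the same ones underlying Property \ref{lemma4}) all boundary contributions at infinity will vanish, so that only the ``interior'' symmetric term survives.

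Concretely, I would integrate by parts once in each coordinate direction. Writing $\Delta u^{p}=\sum_{j=1}^{n}\partial_{x_{j,t}}^{2}(u^{p})$ and integrating $\int u^{p-1}\partial_{x_{j,t}}^{2}(u^{p})\,\d x_t$ by parts in the single variable $x_{j,t}$, the boundary term $\big[u^{p-1}\partial_{x_{j,t}}(u^{p})\big]_{x_{j,t}=-\infty}^{\infty}$ vanishes by rapid decay, leaving $-\int \partial_{x_{j,t}}(u^{p-1})\,\partial_{x_{j,t}}(u^{p})\,\d x_t$. Summing over $j=1,\dots,n$ yields the symmetric expression $\int_{\R^n}u^{p-1}\Delta u^{p}\,\d x_t=-\int_{\R^n}\nabla(u^{p-1})\cdot\nabla(u^{p})\,\d x_t$.

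The right-hand side is manifestly unchanged if one swaps the roles of $u^{p-1}$ and $u^{p}$. Hence, integrating by parts a second time -- now moving the derivative back onto $u^{p}$ rather than onto $u^{p-1}$ -- and again discarding the vanishing boundary terms, one obtains $-\int_{\R^n}\nabla(u^{p-1})\cdot\nabla(u^{p})\,\d x_t=\int_{\R^n}(\Delta u^{p-1})\,u^{p}\,\d x_t$, which is exactly \eqref{T1}. Equivalently, one can present the whole computation in one line as two applications of the divergence theorem, but splitting it through the intermediate symmetric form makes the symmetry transparent.

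The only point that requires care -- and the main (and rather mild) obstacle -- is the justification that the one–dimensional boundary terms $\big[u^{p-1}\partial_{x_{j,t}}(u^{p})\big]$ and $\big[u^{p}\,\partial_{x_{j,t}}(u^{p-1})\big]$ tend to $0$ as $x_{j,t}\to\pm\infty$. Each of these is a product of a power of $u$ with a first derivative of a power of $u$, of the same general shape as the quantities appearing in Property \ref{lemma4}; under the standing assumption that $u$ is a smooth, strictly positive, rapidly decaying probability density (and the convention, used implicitly throughout the paper, that the relevant powers of $u$ remain integrable/controlled at infinity), these products vanish at infinity, so the two integrations by parts are legitimate. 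No restriction on $p$ beyond $p>0,\ p\neq1$ is needed for this identity.
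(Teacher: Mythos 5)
Your proof is correct and follows essentially the same route as the paper, which also passes through the intermediate symmetric form $-\int\nabla u^{p-1}\cdot\nabla u^{p}\,\d x_t$ via two integrations by parts. Your added care about the vanishing of the one-dimensional boundary terms is a reasonable elaboration of what the paper leaves implicit.
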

\begin{proof} Integrating by parts\cite{Savare2014}, we have
{\footnotesize
$$\int u^{p-1}\Delta u^p\d x_t=-\int\nabla u^{p-1}\nabla u^p\d x_t=\int\Delta u^{p-1} u^p\d x_t.$$
}
\end{proof}

\begin{lemma}
By Cauchy-Schwarz inequality we have
{\footnotesize
\begin{eqnarray}\label{T2}
(\int \Delta u^{p-1} u^p\d x_t)^2\leq\int u^p\d x_t\int(\Delta u^{p-1})^2u^p\d x_t.
\end{eqnarray}}

\end{lemma}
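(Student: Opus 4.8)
The plan is to derive \eqref{T2} as a direct instance of the Cauchy--Schwarz inequality, applied with respect to the weighted measure $\d\nu:=u^p\,\d x_t$ on $\R^n$. Since $u$ is strictly positive we have $u^p>0$ everywhere, so $\d\nu$ is a genuine positive measure, and since $u$ is a smooth, rapidly decaying probability density, $\int_{\R^n}u^p\,\d x_t<\infty$, so $\nu$ is finite. First I would take the pair of functions $\phi:=\Delta u^{p-1}$ and $\psi:=1$, regarded as elements of $L^2(\d\nu)$. Cauchy--Schwarz in the form $\bigl(\int\phi\psi\,\d\nu\bigr)^2\le\int\phi^2\,\d\nu\int\psi^2\,\d\nu$ then reads
\begin{equation*}
\Bigl(\int_{\R^n}\Delta u^{p-1}\,u^p\,\d x_t\Bigr)^2\le\int_{\R^n}(\Delta u^{p-1})^2\,u^p\,\d x_t\;\int_{\R^n}u^p\,\d x_t,
\end{equation*}
which is precisely \eqref{T2}.

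Equivalently, one may keep the flat Lebesgue measure and split the weight symmetrically: write $\Delta u^{p-1}\,u^p=\bigl(\Delta u^{p-1}\,u^{p/2}\bigr)\cdot u^{p/2}$ and apply Cauchy--Schwarz to $f:=\Delta u^{p-1}\,u^{p/2}$ and $g:=u^{p/2}$. Then $\int f g\,\d x_t=\int\Delta u^{p-1}\,u^p\,\d x_t$, $\int f^2\,\d x_t=\int(\Delta u^{p-1})^2\,u^p\,\d x_t$, and $\int g^2\,\d x_t=\int u^p\,\d x_t$, so the inequality $\bigl(\int fg\bigr)^2\le\int f^2\int g^2$ again yields exactly \eqref{T2}. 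Either formulation works; I would present the weighted-measure version as it is the shortest.

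There is essentially no obstacle here beyond checking that the three integrals are finite, i.e. that $\Delta u^{p-1}\in L^2(u^p\,\d x_t)$ (the membership $1\in L^2(u^p\,\d x_t)$ being the statement that $u$ is a probability density up to normalization, hence $\int u^p<\infty$). This integrability is guaranteed by the standing hypotheses that $u$ is smooth, strictly positive and rapidly decaying --- the same hypotheses under which Property \ref{lemma4} holds --- since $\Delta u^{p-1}$ expands into terms of the form $u^{p-2}\,\partial^2 u$ and $u^{p-3}(\partial u)^2$, whose squares weighted by $u^p$ are of the type already controlled in that setting. No restriction on the sign or size of $p$, $\mu$, or $n$ is needed for this step.
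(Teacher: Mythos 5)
Your proof is correct and matches the paper's intent: the paper simply invokes the Cauchy--Schwarz inequality without writing out the details, and your weighted-measure (equivalently, $u^{p/2}\cdot u^{p/2}$ splitting) argument is exactly the standard instantiation being referenced. No discrepancy to report.
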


Then, we obtain
{
\begin{equation}\label{dN2}
\begin{array}{ll}
\frac{\d^2}{\d^2t}N_p(u)&\!\!\!=\frac{\mu}{n}\frac{\d^2H_p(u)}{\d^2t}{\rm e}^{\frac{\mu}{n}H_p(u)}+\left(\frac{\mu}{n}\frac{\d H_p(u)}{\d t}\right)^2{\rm e}^{\frac{\mu}{n}H_p(u)}\\[0.2cm]
&\!\!\!=\frac{\mu}{n}{\rm e}^{\frac{\mu}{n}H_p(u)}I_{2,n},
\end{array}\end{equation}}
where
{\footnotesize $I_{2,n}=\frac{\d^2H_p(u)}{\d^2t}+\dfrac{\mu}{n}\left(\frac{\d H_p(u)}{\d t}\right)^2$}.
So, by \eqref{dh1},\eqref{dh2},we have
{
\begin{equation}\label{T33}
\begin{aligned}
I_{2,n}&=\frac{p}{1-p}\frac{\frac{\partial}{\partial t}(\int u^{p-1}\frac{\partial u}{\partial t}\d x_t)\int u^p\d x_t-p(\int u^{p-1}\frac{\partial u}{\partial t}\d x_t)^2}{(\int u^p\d x_t)^2}\\
&+\frac{\mu}{n}(\frac{p}{1-p}\frac{\int u^{p-1}\Delta u^p\d x_t}{\int u^p\d x_t})^2\\
&=\frac{\mu p^2}{n(1-p)^2}\frac{(\int u^{p-1}\Delta u^p\d x_t)^2}{(\int u^p\d x_t)^2}+\frac{p}{1-p}\frac{\frac{\partial}{\partial t}(\int u^{p-1}\frac{\partial u}{\partial t}\d x_t)\int u^p\d x_t}{(\int u^p\d x_t)^2}\\
&-\frac{p^2}{1-p}\frac{(\int u^{p-1}\Delta u^p\d x_t)^2}{(\int u^p\d x_t)^2}\\
&=(\frac{\mu p^2}{n(1-p)^2}-\frac{p^2}{1-p})\frac{(\int u^{p-1}\Delta u^p\d x_t)^2}{(\int u^p\d x_t)^2}\\
&+\frac{p}{1-p}\frac{\frac{\partial}{\partial t}(\int u^{p-1}\frac{\partial u}{\partial t}\d x_t)\int u^p\d x_t}{(\int u^p\d x_t)^2}\\
&\overset{\eqref{T1}}{=}\frac{(\mu-n(1-p))p^2}{n(1-p)^2}\frac{(\int \Delta u^{p-1}u^p\d x_t)^2}{(\int u^p\d x_t)^2}
+\frac{p}{1-p}\frac{\frac{\partial}{\partial t}(\int u^{p-1}\frac{\partial u}{\partial t}\d x_t)\int u^p\d x_t}{(\int u^p\d x_t)^2}\\
&\overset{(i)}{\leq}\frac{(\mu-n(1-p))p^2}{n(1-p)^2}\frac{\int u^p \d x_t\int (\Delta u^{p-1})^2u^p\d x_t}{(\int u^p\d x_t)^2}\\
&+\frac{p}{1-p}\frac{\frac{\partial}{\partial t}(\int u^{p-1}\frac{\partial u}{\partial t}\d x_t)\int u^p\d x_t}{(\int u^p\d x_t)^2}\\
&=\frac{1}{\int u^p\d x_t}\left(\frac{(\mu-n(1-p))p^2}{n(1-p)^2}\int (\Delta u^{p-1})^2u^p\d x_t+\frac{p}{1-p}\frac{\partial}{\partial t}(\int u^{p-1}\frac{\partial u}{\partial t}\d x_t)\right)\\
&=\frac{1}{\int u^p\d x_t}\int\left(\frac{(\mu-n(1-p))p^2}{n(1-p)^2}(\Delta u^{p-1})^2u^p+\frac{p}{1-p}\frac{\partial}{\partial t}(u^{p-1}\frac{\partial u}{\partial t})\right)\d x_t\\
&=\frac{1}{\int u^p\d x_t}\int F_{2,n}\d x_t,
\end{aligned}
\end{equation}
}
where $F_{2,n}=\frac{(\mu-n(1-p))p^2}{n(1-p)^2}(\Delta u^{p-1})^2u^p+\frac{p}{1-p}\frac{\partial}{\partial t}(u^{p-1}\frac{\partial u}{\partial t})$.

{ \textbf{Remark:} In \eqref{T33}, the step $(i)$ is according to \eqref{T2}, and $\frac{(\mu-n(1-p))p^2}{n(1-p)^2}\geq0$ should be satisfied, which is true under condition $p\geq1-\frac{\mu}{n}$. When $\mu:=2+n(p-1)$, $\frac{(\mu-n(1-p))p^2}{n(1-p)^2}\geq0$ yields   $p\ge1-\frac{1}{n}$. Savar\'{e}-Toscani \cite{Savare2014} also used the inequality \eqref{T2}, but ignore the nonnegativity of the coefficient $\frac{(\mu-n(1-p))p^2}{n(1-p)^2}$, thus the parameter's range $p>1-\frac{2}{n}$ in \cite{Savare2014} should be corrected to $p\ge1-\frac{1}{n}$.}

Further, we can get
{
\begin{equation}\label{T333}
\begin{aligned}
F_{2,n}&=\frac{(\mu-n(1-p))p^2}{n(1-p)^2}u^p\sum\limits_{a=1}^{n}\sum\limits_{b=1}^{n}
[(\frac{\partial^2}{\partial^2x_{a,t}}u^{p-1})(\frac{\partial^2}{\partial^2x_{b,t}}u^{p-1})]\\
&+\frac{p}{1-p}\frac{\partial}{\partial t}[u^{p-1}\sum\limits_{a=1}^{n}(\frac{\partial^2}{\partial^2x_{a,t}}u^{p})]\\
&=\frac{(\mu-n(1-p))p^2}{n(1-p)^2}u^p\sum\limits_{a=1}^{n}\sum\limits_{b=1}^{n}
[(\frac{\partial^2}{\partial^2x_{a,t}}u^{p-1})(\frac{\partial^2}{\partial^2x_{b,t}}u^{p-1})]\\
&+\frac{p}{1-p}\sum\limits_{a=1}^{n}[(p-1)u^{p-2}(\frac{\partial^2}{\partial^2x_{a,t}}u^p)\frac{\partial u}{\partial t}
+pu^{p-1}\frac{\partial^2}{\partial^2x_{a,t}}(u^{p-1}\frac{\partial u}{\partial t})]\\
&=\frac{(\mu-n(1-p))p^2}{n(1-p)^2}u^p\sum\limits_{a=1}^{n}\sum\limits_{b=1}^{n}
[(\frac{\partial^2}{\partial^2x_{a,t}}u^{p-1})(\frac{\partial^2}{\partial^2x_{b,t}}u^{p-1})]\\
&+\frac{p}{1-p}\sum\limits_{a=1}^{n}\left[(p-1)u^{p-2}(\frac{\partial^2}{\partial^2x_{a,t}}u^p)\sum\limits_{b=1}^{n}(\frac{\partial^2}{\partial^2x_{b,t}}u^p)\right.\\
&\left.+pu^{p-1}\frac{\partial^2}{\partial^2x_{a,t}}(u^{p-1}\sum\limits_{b=1}^{n}(\frac{\partial^2}{\partial^2x_{b,t}}u^p))\right]\\
&=\frac{(\mu-n(1-p))p^2}{n(1-p)^2}u^p\sum\limits_{a=1}^{n}\sum\limits_{b=1}^{n}
[(\frac{\partial^2}{\partial^2x_{a,t}}u^{p-1})(\frac{\partial^2}{\partial^2x_{b,t}}u^{p-1})]\\
&+\frac{p}{1-p}\sum\limits_{a=1}^{n}\sum\limits_{b=1}^{n}\left[(p-1)u^{p-2}(\frac{\partial^2}{\partial^2x_{a,t}}u^p)(\frac{\partial^2}{\partial^2x_{b,t}}u^p)\right.\\
&\left.+pu^{p-1}\frac{\partial^2}{\partial^2x_{a,t}}\left(u^{p-1}(\frac{\partial^2}{\partial^2x_{b,t}}u^p)\right)\right]\\
&=\sum\limits_{a=1}^{n}\sum\limits_{b=1}^{n}\mathcal{T}_{a,b},
\end{aligned}\end{equation}}
where
{
\begin{equation}\label{T334}
\begin{aligned}
\mathcal{T}_{a,b}&=\frac{(\mu-n(1-p))p^2}{n(1-p)^2}u^p
[(\frac{\partial^2}{\partial^2x_{a,t}}u^{p-1})(\frac{\partial^2}{\partial^2x_{b,t}}u^{p-1})]\\
&+\frac{p}{1-p}\left[(p-1)u^{p-2}(\frac{\partial^2}{\partial^2x_{a,t}}u^p)(\frac{\partial^2}{\partial^2x_{b,t}}u^p)\right.\\
&\left.+pu^{p-1}\frac{\partial^2}{\partial^2x_{a,t}}\left(u^{p-1}(\frac{\partial^2}{\partial^2x_{b,t}}u^p)\right)\right].
\end{aligned}\end{equation}}
For convenience, introduce the notation
$u_{i,j}:=\frac{\partial^{i+j}u}{\partial^ix_{a,t}\partial^j x_{b,t}}$.
By simple computation, we have {\footnotesize $\mathcal{T}_{a,b}=-\frac{u^{3p-6}p^2}{(p-1)n}T_{a,b}$}, and
{
\begin{equation}
\label{eq-tab}
\begin{aligned}
T_{a,b}&=4np^4u_{0, 1}^2u_{1, 0}^2+2np^3uu_{0, 1}^2u_{2, 0}
+8np^3uu_{0, 1}u_{1, 0}u_{1, 1}\\&
+4np^3uu_{0, 2}u_{1, 0}^2-15np^3u_{0, 1}^2u_{1, 0}^2
-\mu p^3u_{0, 1}^2u_{1, 0}^2+2np^2u^2u_{0, 1}u_{2, 1}\\&
+2np^2u^2u_{0, 2}u_{2, 0}
+4np^2u^2u_{1, 0}u_{1, 2}+2np^2u^2u_{1, 1}^2-3np^2uu_{0, 1}^2u_{2, 0}\\&
-20np^2uu_{0, 1}u_{1, 0}u_{1, 1}-8np^2uu_{0, 2}u_{1, 0}^2-\mu p^2uu_{0, 1}^2u_{2, 0}
-\mu p^2uu_{0, 2}u_{1, 0}^2\\&
+16np^2u_{0, 1}^2u_{1, 0}^2+5\mu p^2u_{0, 1}^2u_{1, 0}^2
+npu^3u_{2, 2}-2npu^2u_{0, 1}u_{2, 1}\\&
-npu^2u_{0, 2}u_{2, 0}-4npu^2u_{1, 0}u_{1, 2}
-2npu^2u_{1, 1}^2-\mu pu^2u_{0, 2}u_{2, 0}\\&
-npuu_{0, 1}^2u_{2, 0}
+12npuu_{0, 1}u_{1, 0}u_{1, 1}+2npuu_{0, 2}u_{1, 0}^2+3\mu puu_{0, 1}^2u_{2, 0}\\&
+3\mu puu_{0, 2}u_{1, 0}^2-npu_{0, 1}^2u_{1, 0}^2-8\mu pu_{0, 1}^2u_{1, 0}^2-nu^2u_{0, 2}u_{2, 0}\\&
+\mu u^2u_{0, 2}u_{2, 0}+2nuu_{0, 1}^2u_{2, 0}+2nuu_{0, 2}u_{1, 0}^2-2\mu uu_{0, 1}^2u_{2, 0}\\&
-2\mu uu_{0, 2}u_{1, 0}^2-4nu_{0, 1}^2u_{1, 0}^2+4\mu u_{0, 1}^2u_{1, 0}^2,
\end{aligned}\end{equation}}
which is a 4th-order differential form.

From  \eqref{dN2}, \eqref{T33}, \eqref{T333}, and \eqref{T334}, we have
{
\begin{equation}\label{E2n}
\frac{\d^2}{\d^2t}N_p(u)\le-\frac{p^2\mu}{n^2}e^{\frac{\mu}{n}H_p(u)}\frac{1}{\int u^p\d x_t}\int u^{3p-6}
E_{2,n}\d x_t
\end{equation}}
where {$E_{2,n}=\sum\limits_{a=1}^{n}\sum\limits_{b=1}^{n}\frac{T_{a,b}}{p-1}$}
and $T_{a,b}$ is defined in \eqref{eq-tab}.
Then the problem {$\frac{\d^2}{\d^2t}N_p(u)\leq0$} can be transformed to {  $\int u^{3p-6}E_{2,n}\d x_t\geq0$}.
Thus, Lemma \ref{lm-pr1} is proved.

\section{A generalized version of CREP}
\label{sec-repi}
In this section, we prove a generalized CREP using the procedure given in section \ref{sec-2}.
\begin{theorem}
\label{th-main1}
Let $u(x_t)$ be a probability density in $\R^n$ solving \eqref{HeatEqu2} and satisfying \eqref{lem4}. Then we give a formula $\Phi(n,p,\mu)$ such that the $p$-th R\'{e}nyi entropy power defined in \eqref{I2} satisfies
\begin{eqnarray}
\label{eq-soi1}
\frac{\d^2}{\d^2t}N_p(x_t)\leq0,
\end{eqnarray}
under the condition $\Phi(n,p,\mu)$, that is $N_p(x_t)$ is concave under $\Phi(n,p,\mu)$.
\end{theorem}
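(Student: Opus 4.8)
The plan is to run the proof procedure of Section~\ref{sec-2} with the parameters $n,p,\mu$ kept symbolic throughout. By Lemma~\ref{lm-pr1}, it suffices to produce a propositional formula $\Phi(n,p,\mu)$ that implies $p\ge 1-\frac{\mu}{n}$ together with the integral inequality $\int_{\R^n} u^{3p-6}E_{2,n}\,\d x_t\ge 0$, where $E_{2,n}=\sum_{a,b}\frac{T_{a,b}}{p-1}$ and $T_{a,b}$ is the $4$th-order differential form displayed in \eqref{eq-tab}. Thus the real content is steps~2 and~3 of the procedure: generating enough constraints and then certifying positivity of a residual quadratic form.

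First I would assemble the constraint set. Using Property~\ref{lemma4} and integration by parts exactly as in the worked example of Section~\ref{sec-2oc}, I obtain the $28$ second-order equational constraints $R_{i,a,b}\in\C_{2,n}$, and I would also record a family of sign-definite building blocks to serve as inequality constraints $I_j$ — for instance integrals against $u^{3p-6}$ of perfect squares of second-order differential monomials, summed over $a,b\in[n]$, which have a fixed sign. Treating each monomial of $\M_{4,n}$ that involves only the variables $\P_{2,n}$ as an independent coordinate, the target identity $E_{2,n}-\sum_i e_i R_i-\sum_j c_j I_j = S$ becomes a linear system in the unknowns $e_i, c_j$ over $\R(n,p,\mu)$, with the extra demand that the residual $S$ be a nonnegative quadratic form. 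I would use Gaussian elimination in $\span_\R(\M_{4,n})$ to put the $R_{i,a,b}$ into a reduced form, solve for the $e_i$ (and choose the $c_j\ge 0$) that cancel the ``bad'' monomials, and arrange the leftover $S$ as a completion of squares whose coefficients are explicit rational functions of $n,p,\mu$.

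The formula $\Phi(n,p,\mu)$ is then read off as the conjunction of: (i) $p\ge 1-\frac{\mu}{n}$, needed to apply Lemma~\ref{lm-pr1} and the step $(i)$ in \eqref{T33}; (ii) the sign conditions $c_j\ge 0$ on the multipliers of the inequality constraints; and (iii) the conditions that render each square coefficient appearing in $S$ nonnegative. Granting $\Phi$, the chain \eqref{eq-proof} yields $\int u^{3p-6}E_{2,n}\,\d x_t\ge 0$, and then \eqref{E2n} gives $\frac{\d^2}{\d^2 t}N_p(u)\le 0$, which is \eqref{eq-soi1}. As a sanity check I would verify that the choice $\mu = 2+n(p-1)$ with $p\ge 1-\frac1n$ satisfies $\Phi$, so that the Savar\'{e}--Toscani regime is recovered as a special case and the theorem is a genuine generalization.

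The main obstacle is step~3: asserting that some assignment $(e_i, c_j)$ makes the residual $S$ a nonnegative quadratic form is a quantified statement over the reals, and although decidable in principle by quantifier elimination~\cite{QE}, the Gram matrix indexed by $\M_{4,n}$ is too large for a blind computation to be practical. The actual work is to discover the correct completion-of-squares pattern — presumably aligning $S$ with squares of the Hessian-type tensor $\partial^2 u^{p-1}$ and of the rank-one tensor $\partial u\otimes\partial u$, after eliminating the mixed and fourth-derivative monomials via the $R_{i,a,b}$ — so that the surviving constraints on $n,p,\mu$ reduce to a short, explicit list of polynomial inequalities that can be analyzed by hand. Once that structure is found, both the verification of $\Phi$ on the classical range and the proof that $S\ge 0$ under $\Phi$ become routine.
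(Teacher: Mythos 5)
Your plan coincides with the paper's own strategy --- reduce via Lemma~\ref{lm-pr1} to $\int u^{3p-6}E_{2,n}\,\d x_t\ge 0$, inject the $28$ equational constraints, add inequality constraints with signed multipliers, and certify a residual quadratic form --- but it stops exactly where the content of the theorem begins, so as it stands there is a genuine gap rather than a proof. The theorem asserts that a specific formula $\Phi(n,p,\mu)$ is \emph{given}; your proposal never produces it, and you concede as much (``the Gram matrix \ldots is too large for a blind computation to be practical''). Two of your choices would also need repair before the computation could even be attempted. The single inequality constraint that makes the paper's reduction close is not a sum of squares of individual monomials but the trace inequality $|\nabla^2 u^{p-1}|^2\ge\frac{1}{n}(\Delta u^{p-1})^2$, added as $c_1I_1$ with $c_1\le 0$; its per-pair components $I_{1,a,b}$ are not individually nonnegative, so a family of ``perfect squares of second-order differential monomials'' is not an adequate substitute and would likely leave the cross term $m_3m_4$ uncancelled. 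Moreover the decomposition must be carried out per pair $(a,b)$ after symmetrization ($L_{a,b}=\frac{1}{p-1}(T_{a,b}+T_{b,a})+c_1(I_{1,a,b}+I_{1,b,a})$), not on the full sum over $\M_{4,n}$ as you describe.

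The larger omission is the elimination of the existential quantifier over the multiplier. After reducing by the Gaussian-eliminated constraints $\widehat R_i$, $\widetilde R_i$, the residual is a binary quadratic form $A_1m_2^2+A_2m_2m_5+A_3m_5^2$ whose nonnegativity is equivalent to three explicit conditions $s_1\ge0$, $s_2\ge0$, $s_3\ge0$ in $n,p,\mu,c_1$; the formula $\Phi$ of the statement is obtained only after eliminating $\exists c_1$ by a hand case analysis on the sign of $c_1-\theta_1$ and on the position of $p$ relative to $\frac{4}{9}$, $n_1=\frac{9-\sqrt{17}}{8}$, $n_2=\frac{9+\sqrt{17}}{8}$, yielding the disjunction $\vee_{i=1}^{8}\vee_{j=2}^{4}\mathbb{T}(i,j)$ of Table~\ref{tab11}. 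This occupies all of Section~\ref{sec-proof2} and is where the theorem is actually proved. Without it, even your proposed sanity check cannot be performed; in the paper the Savar\'{e}--Toscani regime $\mu=2+n(p-1)$, $p\ge1-\frac{1}{n}$ is recovered by the explicit choice $c_1=\theta_1=-\frac{2n}{(p-1)^2}$, which makes $\widehat L_{a,b}=0$.
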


The proof of the above theorem consists three steps which will be given in the following subsections.

\subsection{Reduce to a finite problem}

We first give an inequality  constraint.
Denote that $|\nabla^2f|^2=\sum_{i,j}(\frac{\partial^2f}{\partial x_{i}\partial x_{j}})^2$.
Then, based on the trace inequality $|\nabla^2f|^2\geq\frac{1}{n}(\Delta f)^2$, we give a nonnegative constraint:
{
\begin{equation}
\label{eq-nc1}
\begin{aligned}
I_1=\frac{u^p}{u^{3p-6}}\left[|\nabla^2u^{p-1}|^2-\frac{1}{n}(\Delta u^{p-1})^2\right]
=\sum\limits_{a=1}^{n}\sum\limits_{b=1}^{n}I_{1,a,b}\ge0
\end{aligned}
\end{equation}}
where {  $I_{1,a,b}=u^{6-2p}\left[(\frac{\partial^2u^{p-1}}{\partial x_{a,t}\partial x_{b,t}})^2-\frac{1}{n}\frac{\partial^2u^{p-1}}{\partial^2x_{a,t}}\frac{\partial^2u^{p-1}}{\partial^2x_{b,t}}\right]$}.

From \eqref{E2n} and \eqref{eq-nc1}, in order for \eqref{eq-soi1} to be true,
it suffices to solve
\\[0.1cm]
\noindent{\bf Problem I}.
Find a formula $\Phi(n,p,\mu)$  such that
{
$$
\begin{array}{l}
{E}_{2,n}\ge\widetilde{E}_{2,n}=E_{2,n}+c_1I_1=\sum\limits_{a=1}^{n}\sum\limits_{b=1}^{n}(\frac{1}{p-1}T_{a,b} + c_1I_{1,a,b}) \ge0
\end{array}
$$}
under the conditions $c_1\le0$, $p\ge1-\frac{\mu}{n}$, {$R_{i,a,b}=0,i=1,\ldots,28$} given in \eqref{eq-2cons}.

Since
{\small $\sum\limits_{a=1}^{n}\sum\limits_{b=1}^{n}T_{a,b}=\sum\limits_{a=1}^{n}\sum\limits_{b=1}^{n}T_{b,a}$} and $I_{1,a,b}=R^{(I)}_{1,b,a}$, we have
{
\begin{equation}
\label{eq-Lab}
\begin{aligned}
\widetilde{E}_{2,n}&=\frac{1}{2}\sum\limits_{a=1}^{n}\sum\limits_{b=1}^{n}[\frac{1}{p-1}
  (T_{a,b}+T_{b,a})+c_1(I_{1,a,b}+I_{1,b,a})]=\frac{1}{2}\sum\limits_{a=1}^{n}\sum\limits_{b=1}^{n}L_{a,b},
\end{aligned}\end{equation}}
where {\small$L_{a,b}=\frac{1}{p-1}(T_{a,b}+T_{b,a})+c_1(I_{1,a,b}+I_{1,b,a})$}.

\subsection{Simplify the problem with the constraints}

From \eqref{eq-Lab}, to solve {\bf Problem I}, it suffices to solve
\\[0.1cm]
\noindent{\bf Problem II}.
Find a formula $\Phi(n,p,\mu)$ such that $L_{a,b}\ge0$ under the  conditions $c_1\le0$, $p\ge1-\frac{\mu}{n}$, and $R_{i,a,b}=0,i=1,\ldots,{28}$.

In this section, we simplify $L_{a,b}$ in {\bf Problem II} with the constraints.
Note that the subscripts $a$ and $b$ are fixed and will be treated as symbols.

Our goal is to reduce $L_{a,b}$ into a quadratic form in certain new variables.
The new variables are all the monomials in $\R[{\mathcal V}_{a,b}]$
with degree 2 and total order 2, where ${\mathcal V}_{a,b}$ is defined in \eqref{eq-v2}.
{
$$
\begin{aligned}
&
m_1=uu_{0,2},\
 m_2=uu_{1,1},\
m_3=uu_{2,0}\\
&
m_4=u_{0,1}^2,\
m_5=u_{1,0}u_{0,1},\
m_6=u_{1,0}^2.
\end{aligned}
$$}

We will simplify the constraints in \eqref{eq-2cons} as follows.
A quadratic monomial in $m_i$ is called a {\em quadratic monomial}.
Write monomials in $\C_{2,n}=\{R_i,i=1,\ldots,N_1\}$ as quadratic monomials if possible.
Doing Gaussian elimination to $\C_{2,n}$ by treating the monomials
as variables and according to a monomial order such that a quadratic monomial is less than a non-quadratic monomial, we obtain
$$\widetilde{\C}_{2,n}={ \C}_{2,n,1}\cup { \C}_{2,n,2},$$
where ${ {\C}}_{2,n,1}$ is the set of quadratic forms in $m_i$,
${{\C}}_{2,n,2}$ is the set of non-quadratic forms, and $\span_\R(\C_{2,n})=\span_\R(\widetilde{\C}_{2,n})$.
We  obtain ${\C}_{2,n,1}=\{\widehat{R}_i,i=1,\ldots,9\}$
and ${\C}_{2,n,2}=\{\widetilde{R}_i,i=1,\ldots,13\}$, where

{
$$\begin{array}{ll}
\widehat{R}_1=2m_{1}m_{5}+\frac{2(3p-5)}{3}m_{4}m_{5},
\widehat{R}_2=m_{2}m_{6}+\frac{3p-5}{3}m_{5}m_{6},\\
\widehat{R}_3=-6m_{3}m_{5}+2(5-3p)m_{5}m_{6},
\widehat{R}_4=(3p-5)m_{4}^2+3m_{1}m_{4},\\
\widehat{R}_5=(3p-5)m_{6}^2+3m_{3}m_{6},
\widehat{R}_6=(3p-5)m_{4}m_{5}+3m_{2}m_{4},\\
\widehat{R}_7=(3p-5)m_{5}^2+2m_{2}m_{5}+m_{3}m_{4}, \\
\widehat{R}_8=m_{1}m_{3}-m_{2}^2+\frac{9p-12}{2}m_{3}m_{4}+\frac{9p^2-27p+20}{2}m_{5}^2, \\
\widehat{R}_9=m_{1}m_{6}-m_{3}m_{4}.\\
%
\widetilde{R}_{1}=u^3u_{0,4}+(3-3p)m_{1}^2+(9p^3-36p^2+47p-20)m_{4}^2,\\
\widetilde{R}_{2}=u^3u_{1,3}+(3-3p)m_{1}m_{2}+(9p^3-36p^2+47p-20)m_{4}m_{5},\\
\widetilde{R}_{3}=u^3u_{3,1}+(3-3p)m_{2}m_{3}+(-9p^2+21p-12)m_{3}m_{5},\\
\widetilde{R}_{4}=u^3u_{4,0}+(3-3p)m_{3}^2+(9p^3-36p^2+47p-20)m_{6}^2,\\
\widetilde{R}_{5}=u^2u_{0,1}u_{0,3}+m_{1}^2+\frac{-9p^2+27p-20}{3}m_{4}^2,\\
\widetilde{R}_{6}=u^2u_{0,1}u_{1,2}+m_{1}m_{2}+\frac{-9p^2+27p-20}{3}m_{4}m_{5},\\
\widetilde{R}_{7}=u^2u_{0,1}u_{3,0}+m_{2}m_{3}+\frac{-9p^2+27p-20}{3}m_{5}m_{6},\\
\widetilde{R}_{8}=u^2u_{1,0}u_{0,3}+m_{1}m_{2}+\frac{-9p^2+27p-20}{3}m_{4}m_{5},\\
\widetilde{R}_{9}=u^2u_{1,0}u_{2,1}+m_{2}m_{3}+\frac{-9p^2+27p-20}{3}m_{5}m_{6},\\
\widetilde{R}_{10}=u^2u_{1,0}u_{3,0}+m_{3}^2+\frac{-9p^2+27p-20}{3}m_{6}^2,\\
\widetilde{R}_{11}=u^3u_{2,2}+(3-3p)m_{2}^2+\frac{9p^2-21p+12}{2}m_{3}m_{4}+\frac{27p^3-108p^2+141p-60}{2}m_{5}^2,\\
\widetilde{R}_{12}=u^2u_{0,1}u_{2,1}+m_{2}^2+\frac{4-3p}{2}m_{3}m_{4}+\frac{-9p^2+27p-20}{2}m_{5}^2, \\
\widetilde{R}_{13}=u^2u_{1,0}u_{1,2}+m_{2}^2+\frac{4-3p}{2}m_{3}m_{4}+\frac{-9p^2+27p-20}{2}m_{5}^2.
\end{array}$$


We now simplify $L_{a,b}$ using $\C_{2,n,1}$ and $\C_{2,n,2}$.
Eliminating the non-quadratic monomials in $L_{a,b}$ using $\C_{2,n,2}$, and doing further reduction by $\C_{2,n,1}$, we have
{\scriptsize
\begin{equation}
\label{2.12a}
\begin{aligned}
&\widehat{L}_{a,b}=L_{a,b}-2(p^3c_{1}+4np^2-4p^2c_{1}-6np+5pc_{1}-2c_{1})\widehat{R}_{7}\\
&-\frac{2}{n}(2n^2p-p^2c_{1}+n^2-n\mu +2pc_{1}-c_{1})\widehat{R}_{8}\\
&-\frac{1}{n}(6n^2p^2-2p^3c_{1}-5n^2p-2np\mu +8p^2c_{1}-4n^2+4n\mu -10pc_{1}+4c_{1})\widehat{R}_{9}\\
&-\frac{2np}{p-1}\widetilde{R}_{11}
-6np\widetilde{R}_{12}
-6np\widetilde{R}_{13}\\
&=(2np+2n-2\mu )m_{2}^2+(5np-5np^2+5p\mu +4n-4\mu )m_{3}m_{4}\\
&+(18np^2-7np^3+7p^2\mu -3np-19p\mu -12n+12\mu )m_{5}^2\\
&+\frac{c_{1}}{n}[{(2n-2)(p^2-2p+1)}m_{2}^2+(4n-2np+5p-4)(p^2-2p+1)m_{3}m_{4}\\
&+(14np-4np^2+7p^2-12n-19p+12)(p^2-2p+1)m_{5}^2]
\end{aligned}\end{equation}}

In order for $\widehat{L}_{a,b}\ge0$ to be true, we need to eliminate
the monomial $m_{3}m_{4}$ from $\widehat{L}_{a,b}$, which can be done with $\widehat{R}_{7}$ as follows.
\begin{equation}
\label{eq-or1}
\widehat{L}_{a,b}+p_{7}\widehat{R}_{7}=A_1m_2^2+A_2m_2m_5+A_3m_5^2
\end{equation}
where
{\begin{equation*}
\begin{aligned}
&p_{7}=(2np^3c_{1}+5n^2p^2-8np^2c_{1}-5p^3c_{1}-5n^2p-5np\mu +10npc_{1}\\
&\quad +14p^2c_{1}-4n^2+4n\mu -4nc_{1}-13pc_{1}+4c_{1})/n.\\
&A_1=-2c_{1}p^2/n+4c_{1}p/n+2np+2c_{1}+2c_{1}p^2-4c_{1}p-2c_{1}/n-2\mu +2n,\\
&A_2=4c_{1}p^3-16c_{1}p^2-10c_{1}p^3/n-10p\mu +20c_{1}p+28c_{1}p^2/n-26c_{1}p/n\\
&\quad+10np^2-10np+8\mu -8c_{1}+8c_{1}/n-8n,\\
&A_3=-8\mu +8n+26c_{1}p^2-24c_{1}p-8c_{1}/n-12c_{1}p^3+2c_{1}p^4+8c_{1}-52c_{1}p^2/n\\
&\quad+34c_{1}p/n+34c_{1}p^3/n-8c_{1}p^4/n+18p\mu +8np^3-22np^2-8p^2\mu +10np.
\end{aligned}
\end{equation*}}

\subsection{Compute $\Phi(n,p,\mu)$ }
From \eqref{eq-or1}, in order to solve \textbf{Problem II}, it suffices to solve

\noindent\textbf{Problem III}:
Find a formula $\Phi(n,p,\mu)$  such that
\begin{equation}
\label{eq-A1}
\Phi(n,p,\mu) = \exists c_1 \forall m_1,m_2,m_3 (c_1\le0 \wedge p\ge1-\frac{\mu}{n} \wedge A_1m_2^2+A_2m_2m_5+A_3m_5^2\ge0).
\end{equation}

In principle, \textbf{Problem III} can be solved with the quantifier elimination~\cite{QE}.
In this paper, the problem is special and an explicit proof is given.

By the knowledge of linear algebra, we have $A_1m_2^2+A_2m_2m_5+A_3m_5^2\ge0$ is equivalent to
$\Delta_{1}=A_1=\frac{2}{n}s_1\ge0$, $\Delta_{2}=A_3=\frac{2}{n}s_{2}\ge0$,
$\Delta_{3}=A_1A_3-\frac{1}{4}A_2^2=\frac{p}{n^2}s_3\ge0$,
where
{
\begin{equation*}\begin{aligned}
&s_{1}=(p-1)^2(n-1)c_{1}+n^2(p+1)-n\mu ,\\
&s_{2}=(p-1)^2(n(p-2)^2-4p^2+9p-4)c_{1}\\
&+n^2(4p^3-11p^2+5p+4)-(4p^2-9p+4)n\mu ,\\
&s_{3}=(4-9p)n^2(\mu -\mu_3)(\mu -\mu_4),
\end{aligned}\end{equation*}}
where $\mu_3$ and $\mu_4$ are defined in \eqref{eq-mu1}
and $p\neq \frac{4}{9}$ is assumed in $\mu_4$.
We thus proved
\begin{lemma} We can eliminate $m_1,m_2,m_3$ in \eqref{eq-A1}:
\begin{equation}\label{set}
\Phi(n,p,\mu)=\exists c_1(s_1\ge0 \wedge  s_2\ge0\wedge  s_3\ge0\wedge c_1\le0\wedge  p\geq1-\mu/n).
\end{equation}
\end{lemma}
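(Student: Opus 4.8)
The plan is to show that the three-fold quantifier block $\forall m_1,m_2,m_3\,(A_1m_2^2+A_2m_2m_5+A_3m_5^2\ge0)$ appearing in \eqref{eq-A1} is, for fixed $n,p,\mu,c_1$, nothing more than a positive-semidefiniteness statement about the binary quadratic form $Q(m_2,m_5)=A_1m_2^2+A_2m_2m_5+A_3m_5^2$ (the variables $m_1,m_3$ do not actually occur in \eqref{eq-or1}, only $m_2$ and $m_5$ do, so those universal quantifiers are vacuous and can be dropped). A binary quadratic form $Q$ is $\ge0$ on all of $\R^2$ exactly when its leading principal minors are nonnegative, i.e. when $\Delta_1=A_1\ge0$, $\Delta_2=A_3\ge0$, and $\Delta_3=A_1A_3-\tfrac14A_2^2\ge0$ (the usual Sylvester criterion for a symmetric $2\times 2$ matrix, valid for the semidefinite case as well). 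So the first step is simply to record this equivalence and substitute it for the innermost conjunct in \eqref{eq-A1}.

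The second step is a direct computation: expand $A_1$, $A_3$, and $A_1A_3-\tfrac14A_2^2$ using the explicit polynomial expressions for $A_1,A_2,A_3$ given just before \eqref{eq-A1}, and factor out the stated multiples. One finds $A_1=\tfrac{2}{n}s_1$ and $A_3=\tfrac{2}{n}s_2$ with $s_1,s_2$ as displayed; these are immediate linear rearrangements grouping the $c_1$-terms against the $c_1$-free terms. The only genuinely nontrivial piece is $\Delta_3=A_1A_3-\tfrac14A_2^2$: after clearing the common denominator $n^2$ one must verify that the resulting polynomial in $n,p,\mu,c_1$ is in fact independent of $c_1$ (all $c_1$, $c_1^2$ contributions cancel) and equals $\tfrac{p}{n^2}\cdot(4-9p)n^2(\mu-\mu_3)(\mu-\mu_4)$, i.e. $s_3=(4-9p)n^2(\mu-\mu_3)(\mu-\mu_4)$ with $\mu_3,\mu_4$ the roots (in $\mu$) of that quadratic, as defined in \eqref{eq-mu1}. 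This cancellation of $c_1$ is the crux and should be checked symbolically; once it holds, $\Delta_3\ge0 \iff s_3\ge0$ since $p>0$.

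Combining the three reductions, the innermost conjunct $A_1m_2^2+A_2m_2m_5+A_3m_5^2\ge0$ is equivalent to $s_1\ge0\wedge s_2\ge0\wedge s_3\ge0$, and substituting this into \eqref{eq-A1} gives exactly \eqref{set}. The existential quantifier on $c_1$ and the side conditions $c_1\le0$, $p\ge1-\mu/n$ are carried along unchanged, which yields the claimed form of $\Phi(n,p,\mu)$.

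The main obstacle is the algebraic identity $A_1A_3-\tfrac14A_2^2 = \tfrac{p}{n}\,s_3$ with $s_3$ free of $c_1$: a priori $A_1A_3-\tfrac14A_2^2$ is quadratic in $c_1$, so two separate cancellations (the $c_1^2$ coefficient vanishing, and the linear-in-$c_1$ coefficient vanishing) must both occur. This is exactly the structural fact that makes the elimination of $m_1,m_2,m_3$ possible without the choice of $c_1$ interfering with the discriminant condition, and it is where the specific coefficients coming from the constraints $\widehat R_7,\widehat R_8,\widehat R_9,\widetilde R_{11},\widetilde R_{12},\widetilde R_{13}$ in \eqref{2.12a}--\eqref{eq-or1} are doing the real work; everything else is bookkeeping.
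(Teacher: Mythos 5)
Your overall route is the same as the paper's: the universal quantifiers reduce to positive semidefiniteness of the binary quadratic form in $m_2,m_5$, which is characterized by $A_1\ge0$, $A_3\ge0$, $A_1A_3-\frac{1}{4}A_2^2\ge0$, and these are rewritten as $s_1\ge0$, $s_2\ge0$, $s_3\ge0$ after dividing by the positive factors $\frac{2}{n}$ and $\frac{p}{n^2}$ (using $p>0$). That part is fine, modulo the pedantic point that Sylvester's leading-minor criterion characterizes definiteness rather than semidefiniteness; for the semidefinite $2\times2$ case one needs all principal minors nonnegative, which is exactly the triple of conditions you wrote down, so no harm done.

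However, the step you single out as ``the crux'' --- that $A_1A_3-\frac{1}{4}A_2^2$ is independent of $c_1$, i.e.\ that the $c_1$ and $c_1^2$ contributions cancel --- is false, and the symbolic check you propose would fail. By \eqref{eq-mu1} both $\mu_3$ and $\mu_4$ are affine functions of $c_1$; indeed \eqref{eq-m34} gives $\mu_3-\mu_4=\frac{4((p-1)^2c_1+2n)}{9p-4}$, which visibly depends on $c_1$. Hence $s_3=(4-9p)n^2(\mu-\mu_3)(\mu-\mu_4)$ is genuinely quadratic in $c_1$, as one would expect from $A_1A_3-\frac{1}{4}A_2^2$ with $A_1,A_2,A_3$ each linear in $c_1$. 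The lemma does not need any such cancellation: in \eqref{set} the variable $c_1$ remains existentially quantified, and all of $s_1,s_2,s_3$ are allowed to depend on it; the price of that dependence is paid later in the case analysis of Section \ref{sec-proof2}, where $c_1$ is eliminated from conditions such as $\mu_4\le\mu\le\mu_3$ via \eqref{eq-ec1}. If you drop the independence claim and simply verify the three polynomial identities $A_1=\frac{2}{n}s_1$, $A_3=\frac{2}{n}s_2$, $A_1A_3-\frac{1}{4}A_2^2=\frac{p}{n^2}s_3$ in $n,p,\mu,c_1$, your argument coincides with the paper's and is correct.
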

We will give an explicit formula for $\Phi$ in \eqref{set}.
First, introduce the following parameters.

{
\footnotesize
\begin{equation}
\label{eq-phi0}
\begin{array}{lll}
n_1 = \frac{9-\sqrt{17}}{8},
&n_2=\frac{9+\sqrt{17}}{8},
&n_3=(\sqrt{17}+1)/2,\\
\theta_1=-\frac{2n}{(p-1)^2},
&\theta_2=\frac{2n^2p(9p-13)}{(p-1)^2(4n+9p-4)},
&\theta_3=(\sqrt{17}-9)n,\\
\theta_4=\frac{n^2p-n^2-n\mu}{(p-1)^2},
&\theta_5=\frac{n^2(9p^2-13p-4)-n(9p-4)\mu}{(p-1)^2(4n+9p-4)},
&\theta_6=\frac{-4n(\sqrt{17}-1)\mu+8n^2}{\sqrt{17}+1},\\
\theta_7 = n(1-p),
&\theta_8 =5n/9,
&\theta_9=-\frac{162n}{25},\\
\theta_{10}=\frac{64n}{\sqrt{17}-9},
&\theta_{11}=\frac{8(9\sqrt{17}+23)n^2}{-32n-49+9\sqrt{17}},
&\theta_{12}=-\frac{16(11\sqrt{17}+47)n\mu+152n^2}{26\sqrt{17}n+73\sqrt{17}+118n+305},\\
\theta_{13}=-\frac{4n(\mu\sqrt{17}+2n+\mu)}{\sqrt{17}-1},
&\theta_{14}=-\frac{8n(22\mu\sqrt{17}-19n-94\mu)}{26\sqrt{17}n+73\sqrt{17}-118n-305},
&\theta_{15}=-\frac{9}{5}n^2-\frac{81}{25}\mu n,\\
\theta_{16} = (\sqrt{17}-1)n/8,
& &\\
 \phi_1\triangleq p\ge1-\frac{1}{n},
& \phi_2 \triangleq \mu =2+n(p-1),
&\phi_3 \triangleq p>1-\frac{1}{n}.
\\
\end{array}
\end{equation}
}

\begin{table}[ht]
\centering
\scalebox{0.8}{
\begin{tabular}{|c|c|c|c|}               \hline
$p>n_2$
& $\theta_4> \theta_1\wedge \theta_5\le0$
& *
& $ \phi_1 \wedge \phi_2$
\\ \hline
$p=n_2$
& $\theta_6> \theta_3 \wedge \theta_{12}\le0$
& *
& $ \phi_1 \wedge \phi_2$
\\ \hline
$\frac{13}{9}< p<n_2$
& $\theta_4> \theta_1\wedge \theta_5\le0$
& *
& $ \phi_1 \wedge \phi_2$
\\ \hline
$n_1<p\leq\frac{13}{9}$, $p\neq1$
& $\phi_3 \wedge \theta_4> \theta_1\wedge \theta_5\le \theta_2$
& $\phi_3 \wedge \theta_7 \le \mu \wedge \theta_4 > \theta_2$
& $ \phi_1 \wedge \phi_2$
\\ \hline
$p=n_1$
& $n< n_3 \wedge \theta_{13}> \theta_{10} \wedge \theta_{14}\le \theta_{11}$
& $n< n_3 \wedge \theta_{13}> \theta_{11} \wedge \mu\ge \theta_{16}$
&$ \phi_1 \wedge \phi_2$
\\ \hline
$\frac{4}{9}<p<n_1$
& $\phi_3 \wedge \theta_4> \theta_1\wedge \theta_5\le \theta_2$
& $\phi_3 \wedge \theta_7 \le \mu \wedge \theta_4 > \theta_2$
& $\phi_1 \wedge \phi_2$
\\ \hline
$p=\frac{4}{9}$
& $n=1 \wedge \theta_{15}> \theta_9 \wedge \mu\ge \theta_8$
& *
& $n=1\wedge -p\le\mu-1\le p$
\\ \hline
$0<p<\frac{4}{9}$
& $n=1 \wedge \theta_7 \le \mu \wedge \theta_4 > \theta_1$
& $n=1 \wedge \theta_5< \theta_1 \wedge \mu\ge \theta_7$
& $n=1 \wedge -p\le\mu-1\le p$
\\ \hline
 \end{tabular}}
\caption{The description for $\Phi$ in  \eqref{set}}\label{tab11}
\end{table}
With these notations, we introduce the conditions for 
defining $\Phi$ in Table  \ref{tab11},
where $*$ means $\varnothing$.
Define $T(i,j)$ to be the formula in the $i$-th row and the $j$-th column in Table \ref{tab11}.
Then we denote
\begin{equation}
\label{eq-tij}
\mathbb{T}(i,j) \triangleq   T(i,1)   \wedge T(i,j)
\hbox{ for } i=1,\ldots,8, j=2,\ldots, 4.
\end{equation}
For examples, $\mathbb{T}(1,2)=(p>n_2\wedge \theta_4> \theta_1\wedge \theta_5\le0)$,
which means that if $p,n,\mu$ satisfy  $\mathbb{T}(1,2)$ then
there exists a $c_1\le0$ such that \eqref{eq-A1} is true and the CREP is valid.
$\mathbb{T}(1,3)= \varnothing$, which means that there exist no values
for $p,n,\mu$ such that \eqref{eq-A1} and the CREP are true.

We now give the main result of the paper, which implies Theorem \ref{th-main1}.
The proof for the theorem can be found in section \ref{sec-proof2}.
Also, from the proof we can see that for the
cases $\mathbb{T}(i,4),i=1,\ldots,8$, the corresponding value for $c_1$ 
is $c_1=\theta_1$. 
\begin{theorem}
\label{nsconditions}
The sufficient and necessary condition for \textbf{Problem III},
 that is, \eqref{eq-A1} to be true,
is
$$\Phi(n,p,\mu)=\vee_{i=1}^{8}\vee_{j=2}^{4} \mathbb{T}(i,j),$$
where $\mathbb{T}(i,j)$ is defined in \eqref{eq-tij}
and  $\vee$ means  disjunction.
\end{theorem}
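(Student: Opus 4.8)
The plan is to reduce Theorem~\ref{nsconditions} to the three scalar inequalities $s_1\ge0$, $s_2\ge0$, $s_3\ge0$ of the preceding lemma, together with $c_1\le0$ and $p\ge1-\mu/n$, and then to perform the quantifier elimination of $c_1$ by hand. The key observation is that all three of $s_1,s_2,s_3$ are \emph{affine} in the single parameter $c_1$: writing $s_k = \alpha_k c_1 + \beta_k$ with $\alpha_k,\beta_k\in\R[n,p,\mu]$, the coefficient of $c_1$ in $s_1$ is $(p-1)^2(n-1)$, in $s_2$ is $(p-1)^2(n(p-2)^2-4p^2+9p-4)$, and in $s_3$ the leading $c_1$-coefficient vanishes (since $s_3=\Delta_3$ is a determinant in which the $c_1^2$ terms cancel — this is exactly why $s_3$ factors through $\mu_3,\mu_4$ and does not depend on $c_1$ at all, after the explicit computation). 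So the existential quantifier $\exists c_1$ ranges over a system of at most three affine constraints plus $c_1\le0$, and the elimination is a Fourier–Motzkin step: for each sign pattern of the $c_1$-coefficients $\alpha_1,\alpha_2$ one gets a lower envelope and an upper envelope for the feasible $c_1$, and feasibility is the assertion that every lower bound (including $-\infty$ from $c_1\le0$ giving the bound $0$ from above) is $\le$ every upper bound.

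Next I would organize the case split exactly as the rows of Table~\ref{tab11}, which is a partition of the $(p)$-axis by the real roots of the $c_1$-coefficients: $n-1=0$ is handled by the forced value $n=1$ rows; $n(p-2)^2-4p^2+9p-4=0$ has roots $n_1,n_2$ (for the relevant branch) and $4-9p=0$ gives $p=4/9$; the boundary cases $p=n_1$, $p=n_2$, $p=4/9$ are treated separately because a $c_1$-coefficient degenerates there. Within each open interval of $p$, the signs of $\alpha_1$ and $\alpha_2$ are constant, so $s_1\ge0$ and $s_2\ge0$ each become either a lower or an upper bound on $c_1$; substituting the definitions \eqref{eq-phi0} of $\theta_1,\dots,\theta_{16}$ shows that $s_1\ge0 \Leftrightarrow c_1 \ge \theta_1$ or $c_1\le\theta_1$ (depending on the sign), $s_2\ge0$ converts to a bound of the form $c_1\lessgtr$ a ratio that is $\theta_2$-related, and the residual $\mu$-conditions ($\theta_4 > \theta_1$, $\theta_5\le 0$, etc.) are precisely the ``lower bound $\le$ upper bound'' compatibility inequalities written after clearing the (sign-definite) denominator $(p-1)^2$ or $(p-1)^2(4n+9p-4)$. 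The $s_3\ge0$ constraint, being $c_1$-free, contributes the factor $(4-9p)(\mu-\mu_3)(\mu-\mu_4)\ge0$, which in each row is either automatically implied by the already-derived $\mu$-bounds or forces the choice between columns $j=2$ and $j=3$ (the two columns correspond to the two ways the interval for $c_1$ can be nonempty depending on which side of a root $\mu$ lies). Column $j=4$ collects the sub-case $\phi_2$, $\mu=2+n(p-1)$, where one checks directly that $c_1=\theta_1$ is feasible, recovering and slightly extending the Savar\'e–Toscani range.

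The main obstacle I expect is \emph{bookkeeping of the denominators' signs} rather than any deep step: to pass from ``$s_k\ge0$'' to an inequality ``$c_1 \ge$ (rational function of $n,p,\mu$)'' one must know the sign of $\alpha_k$, and to pass from the pair of envelope inequalities to the $\theta$-conditions in the table one must multiply through by $(p-1)^2$ (always $\ge0$, harmless) and by $4n+9p-4$, whose sign flips across $p=(4-4n)/9$; keeping track of where this and $4-9p$ change sign, and verifying that the $p$-intervals in Table~\ref{tab11} are exactly the maximal intervals on which all relevant signs are constant, is the delicate part. A secondary subtlety is the genuinely degenerate rows $p=n_1,n_2,4/9$ and $n=1$, where a $c_1$-coefficient or the denominator $4n+9p-4$ vanishes and the affine constraint becomes either vacuous or an outright sign condition on $\beta_k$; these must be excised from the generic analysis and checked individually, which is why they occupy their own rows with their own $\theta_6,\theta_8,\dots,\theta_{16}$. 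Once all rows are verified, the disjunction $\Phi=\vee_{i,j}\mathbb{T}(i,j)$ is exactly the union of the per-row feasibility regions, and necessity follows because in each row the displayed conditions were obtained as equivalences, not merely implications; combined with Theorem~\ref{th-m1} this yields Theorem~\ref{th-main1}.
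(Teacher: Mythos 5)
There is a genuine gap in your plan, and it sits exactly where you locate the ``key observation.'' You claim that in $s_3=\Delta_3=A_1A_3-\tfrac14A_2^2$ the $c_1^2$ terms cancel and that $s_3$ ``does not depend on $c_1$ at all.'' This is false. Both roots $\mu_3$ and $\mu_4$ in the factorization $s_3=(4-9p)n^2(\mu-\mu_3)(\mu-\mu_4)$ are affine in $c_1$ with nonzero slopes: the $c_1$-coefficient of $\mu_3$ is $-(p-1)^2/n$ and that of $\mu_4$ is $-(p-1)^2(4n+9p-4)/(n(9p-4))$, so the $c_1^2$-coefficient of $s_3$ works out to $-(p-1)^4(4n+9p-4)$, which does not vanish for $p\neq1$. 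Hence $s_3\ge0$ is a genuinely \emph{quadratic} constraint in $c_1$, and your Fourier--Motzkin step over a system of affine constraints does not apply to it. This is not a cosmetic issue: the feasible $c_1$-set cut out by $s_3\ge0$ is an interval or the complement of an interval depending on the sign of that leading coefficient, and the ordering of $\mu_3$ and $\mu_4$ flips at $c_1=\theta_1$ because $\mu_3-\mu_4=4((p-1)^2c_1+2n)/(9p-4)$ (see \eqref{eq-m34}). The paper's proof is organized around precisely this fact --- its top-level trichotomy is $\theta_1<c_1\le0$, $c_1<\theta_1$, $c_1=\theta_1$ (Cases 1, 2, 3), and only inside each branch does it convert $\mu\lessgtr\mu_3,\mu_4$ back into bounds on $c_1$ via \eqref{eq-ec1}. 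Your proposal has no mechanism to generate this split, so it cannot reproduce the table; in particular it cannot explain why $\mathbb{T}(8,3)$ lives in the regime $c_1<\theta_1$ while $\mathbb{T}(8,2)$ lives in $\theta_1<c_1\le0$.

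A secondary error: the breakpoints $n_1,n_2$ are the roots of $4p^2-9p+4=0$, i.e.\ of the coefficient of $\mu$ in $s_2$ (equivalently the denominator of $\mu_2$), not of the $c_1$-coefficient $n(p-2)^2-4p^2+9p-4$ as you assert; the latter's roots depend on $n$ and are not $n_1,n_2$. Likewise $p=13/9$ enters through the sign of $\mu_4-\theta_7$ in \eqref{eq-m45} (the factor $9p-13$ in $\theta_2$), which your row-partition rationale does not account for. The correct repair is to treat $s_1\ge0$ and $s_2\ge0$ as bounds on $\mu$ ($\mu\le\mu_1$, $\mu\lessgtr\mu_2$ according to the sign of $4p^2-9p+4$), treat $s_3\ge0$ as confining $\mu$ between or outside $\mu_3(c_1)$ and $\mu_4(c_1)$ according to the signs of $4-9p$ and $(p-1)^2c_1+2n$, and only then eliminate $c_1$ by inverting the affine maps $c_1\mapsto\mu_3,\mu_4$ --- which is what the paper does. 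Your description of column $j=4$ (the case $c_1=\theta_1$, $\mu=2+n(p-1)$, recovering Savar\'e--Toscani) is correct.
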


\subsection{Compare with existing results}

We will show that  our result includes the result proved in \cite{Savare2014}, and essentially more results.

In \cite{Savare2014},  CREP was proved under the conditions $\mu=2+n(p-1)$ and
$p\geq1-\frac{1}{n}$.
Obviously, the result proved in \cite{Savare2014} corresponds to $\mathbb{T}(i,4),i=1,\ldots,8$ in Table \ref{tab11}.

We can also prove the result   in \cite{Savare2014} directly as follows.
Set $\mu=2+n(p-1)$ and $c_{1}=\theta_1\le0$ in \eqref{2.12a},
we obtain $\widehat{L}_{a,b}=0$.
Also,  the condition $p\geq1-\frac{\mu}{n}$ implies $p\geq1-\frac{1}{n}$.
So, when $\mu=2+n(p-1)$ and $p\geq1-\frac{1}{n}$, the CREP is proved based on our proof procedure.

We can use the SDP code in \cite{GYG2020}[APPENDIX B] to verify the result in Table \ref{tab11}.
For instance,  for $\mu=2, p=\frac{11}{5}, n=2$, the condition $p\geq1-\frac{\mu}{n}$ is satisfied naturally. With the SDP code in \cite{GYG2020}, we obtain
$\widehat{L}_{a,b}+\frac{172}{25}\widehat{R}_{7}=(2\sqrt{2}m_{2}+\frac{344}{100\sqrt{2}}m_{5})^2+\frac{22}{625}m_{5}^2\geq0$ with $c_{1}=-\frac{5}{9}$.
Thus, the CREP is proved when $\mu=2, p=\frac{11}{5}, n=2$.
This case $[ \mu=2,p=\frac{11}{5},n=2, c_{1}=-\frac{5}{9}]$ is included in $\mathbb{T}(1,1)$ in Table \ref{tab11}.
Note that $\mu=2+n(p-1)$ is not satisfied for these parameters.

\subsection{Proof of Theorem \ref{nsconditions}}
\label{sec-proof2}

Introduce more parameters.
{
\scriptsize
\begin{equation}
\label{eq-mu1}
\begin{aligned}
 \mu_{1} &= ((p-1)^2(n-1)c_1 + n^2(p+1))/n,\\
 \mu_{2} &=((p-1)^2(n(p-2)^2-4p^2+9p-4)c_{1}+n^2(4p^3-11p^2+5p+4))/(n(4p^2-9p+4))\\
 \mu_{3} &= (n^2p-p^2c_{1}-n^2+2p c_{1}-c_{1})/n,\\
 \mu_{4} &= (9n^2p^2-4np^2c_{1}-9p^3c_{1}-13n^2p+8npc_{1}+22p^2c_{1}-4n^2-4nc_{1}-17p c_{1}+4c_{1})/(n(9p-4)),\\
 \mu_{5} &= -(nc_{1}\sqrt{17}-c_{1}\sqrt{17}+136n^2+17nc_{1}-17c_{1})/(4n(\sqrt{17}-17)),\\
 \mu_{6} &= -(c_{1}\sqrt{17}-8n^2+c_{1})/(4n(\sqrt{17}-1)),\\
 \mu_{7} &= -(26nc_{1}\sqrt{17}+73c_{1}\sqrt{17}+152n^2+118nc_{1}+305c_{1})/(16n(11\sqrt{17}+47)),\\
 \mu_{8} &= -(nc_{1}\sqrt{17}-c_{1}\sqrt{17}-136n^2-17nc_{1}+17c_{1})/(4n(\sqrt{17}+17)),\\
 \mu_{9} &= -(c_{1}\sqrt{17}+8n^2-c_{1})/(4n(\sqrt{17}+1)),\\
 \mu_{10} &= -(26nc_{1}\sqrt{17}+73c_{1}\sqrt{17}-152n^2-118nc_{1}-305c_{1})/(16n(11\sqrt{17}-47)),\\
 \mu_{11} &= (117n^2+25nc_{1}-25c_{1})/(81n),\\
 \mu_{12} &= (7218n^2+1225nc_{1}-400c_{1})/(1296n),\\
 \mu_{13} &= -(5(9n^2+5c_{1}))/(81n),\\
\eta_1&=\frac{2n^2}{p-1},
\eta_2=-\frac{16n^2}{\sqrt{17}-1},
\eta_3=-\frac{18}{5}n^2.
\\
\end{aligned}\end{equation}
}

We first treat the three inequalities $s_1\ge0$, $s_2\ge0$, $s_3\ge0$.
Firstly,  $s_1\ge0$ is equivalent to $ \mu\le \mu_{1}$.
Secondly, since the roots of $4p^2-9p+4=0$ are $n_1$ and $n_2$,
we have $s_2\ge0 \Leftrightarrow \mu\le \mu_{2}$ if $p<n_1$ or $p>n_2$;
and
$s_2\ge0 \Leftrightarrow\mu\ge \mu_{2}$ if $n_1<p<n_2$.
In order to analyse $s_3\ge0$, we first compute
\begin{equation}
\label{eq-m34}
\mu_{3}-\mu_{4} = \frac{4((p-1)^2c_{1}+2n)}{9p-4}.
\end{equation}
Therefore, $s_3\ge0$ can be divided into four cases:
$s_3\ge0 \Leftrightarrow \mu_{4}\le\mu\le \mu_{3}$ if $p>\frac{4}{9}$ and $\theta_1<c_1$;
$s_3\ge0 \Leftrightarrow \mu_{3}\le\mu\le \mu_{4}$ if $p>\frac{4}{9}$ and $c_1< \theta_1$;
$s_3\ge0 \Leftrightarrow \mu\ge\mu_{3}\ {\rm or}\ \mu\le\mu_{4}$ if $p<\frac{4}{9}$ and $c_1<\theta_1$;
$s_3\ge0 \Leftrightarrow \mu\ge\mu_{4}\ {\rm or}\ \mu\le\mu_{3}$ if $p<\frac{4}{9}$ and $\theta_1<c_1$.
Finally,  $p\ge1-\frac{ \mu}{n}$ is equivalent to $ \mu\ge \theta_7$.

Based on the above analysis and \eqref{set}, $\Phi(n,p,\mu)$
can be divided into six cases:
{
\footnotesize
\begin{equation}
\label{eq-phi1}
\begin{array}{l}
 \Phi(n,p,\mu) \Leftrightarrow \max(\mu_{4},\theta_7)\le\mu\le \min(\mu_{1},\mu_{2},\mu_{3}), \hbox{ if } ( p\in(\frac{4}{9},n_1)  \hbox{ or }  p> n_2 ) \hbox{ and } \theta_1<c_1\le0 ; \\
  \Phi(n,p,\mu) \Leftrightarrow\max(\mu_{2},\mu_{4},\theta_7)\le\mu\le \min(\mu_{1},\mu_{3}) , \hbox{ if }  p\in(n_1,n_2)   \hbox{ or }  \theta_1<c_1\le0 ; \\

  \Phi(n,p,\mu) \Leftrightarrow \max(\mu_{3},\theta_7)\le\mu\le \min(\mu_{1},\mu_{2},\mu_{4}) , \hbox{ if } ( p\in(\frac{4}{9},n_1)  \hbox{ or }  p> n_2 ) \hbox{ or } c_1<\theta_1 ; \\
  \Phi(n,p,\mu) \Leftrightarrow \max(\mu_{2},\mu_{3},\theta_7)\le\mu\le \min(\mu_{1},\mu_{4}) , \hbox{ if }  p\in(n_1,n_2)  \hbox{ or }  c_1<\theta_1 ; \\
  \Phi(n,p,\mu) \Leftrightarrow
 \theta_7\le\mu\le\min(\mu_{1},\mu_{2},\mu_{4})  \hbox{ or }
 \max(\mu_{3},\theta_7)\le \mu \le\min(\mu_{1},\mu_{2}) , \hbox{ if }  p<\frac{4}{9}  \hbox{ or }  c_1< \theta_1 ; \\
  \Phi(n,p,\mu) \Leftrightarrow
 \theta_7\le\mu\le\min(\mu_{1},\mu_{2},\mu_{3})  \hbox{ or }
 \max(\mu_{4},\theta_7)\le \mu\le\min(\mu_{1},\mu_{2}) , \hbox{ if }  p<\frac{4}{9}  \hbox{ or }  \theta_1<c_1\le0.
\end{array}
\end{equation}
}
\noindent
The special cases $p=\frac{4}{9}, n_1,n_2$ and $c=\theta_1$ need to be considered separately.
Also, we omit  $\exists c_1$ in the above formulations.

In what below, we will give detailed analysis of the above six cases
which leads to the results in Table 1. We first have the formulas:

{
\footnotesize
\begin{eqnarray}
\mu_{1}- \mu_{3}&=&(p-1)^2c_{1}+2n,\label{eq-m13}\\
\mu_{1}-\mu_{4} &=& \frac{9p((p-1)^2c_1+2n)}{9p-4},\label{eq-m14}\\
\mu_{2}- \mu_{3}&=&\frac{2(p-2)^2((1/2)(p-1)^2c_{1}+n)}{(4p^2-9p+4)},\label{eq-m23}\\
\mu_{2}-\mu_{4} &= &\frac{p(3p-4)^2((p-1)^2c_{1}+2n)}{(4p^2-9p+4)(9p-4)},\label{eq-m24}\\
\mu_{4}- \theta_7 &=& \frac{-(p-1)^2(4n+9p-4)c_{1}+2n^2p(9p-13)}{n(9p-4)},\label{eq-m45}\\
\mu_{3}-\theta_7 &=& \frac{(p-1)(2n^2-pc_{1}+c_{1})}{n},\label{eq-m35}\\
\theta_2-\theta_1 &=& \frac{18n(np-n+1)(p-\frac{4}{9})}{(p-1)^2(4n+9p-4)},\label{eq-c1211}\\
\theta_1-\eta_1 &=& -\frac{2n(np-n+1)}{(p-1)^2},\label{eq-c1114}\\
\eta_1-\theta_2 &=& \frac{8n^2(np-n+1)}{(p-1)^2(4n+9p-4)}.\label{eq-c1412}
\end{eqnarray}
}

We also have the following formulas which will be used to eliminate $c_1$ in the proof.
\begin{equation}
\label{eq-ec1}
\begin{array}{ll}
\mu\le\mu_{3}\Leftrightarrow c_1 \le \theta_4,
&\mu\ge\mu_{4}\Leftrightarrow c_1\ge\theta_5, \hbox{ if } p>\frac{4}{9},\\ \mu\ge\mu_{4}\Leftrightarrow c_1\le\theta_5, \hbox{ if } p<\frac{4}{9},
&\mu\le\mu_{4}\Leftrightarrow c_1\le\theta_5, \hbox{ if } p>\frac{4}{9}, \\ \mu\le\mu_{4}\Leftrightarrow c_1\ge\theta_5, \hbox{ if } p<\frac{4}{9},
&\mu\le\mu_{6}\Leftrightarrow c_1\le\theta_6,\\
\mu\ge\mu_{7}\Leftrightarrow c_{1}\ge\theta_{12},
&\mu\le\mu_{9}\Leftrightarrow c_{1}\le\theta_{13}, \\
\mu\ge\mu_{10}\Leftrightarrow c_{1}\ge\theta_{14},
&\mu\le\mu_{13}\Leftrightarrow c_{1}\le\theta_{15}.\\
\end{array}
\end{equation}

We divide the proof  into several cases, first according to the
values of $c_1$ and then according to the values of $n$.

{\bf Case 1}:  $\theta_1< c_{1}\le0$.
From \eqref{eq-m13}, we have $\mu_{1}> \mu_{3}$ in this case and from \eqref{eq-phi1}, $\Phi(n,p,\mu)$ simplifies to three cases:
\begin{equation}
\label{eq-phic1}
\begin{array}{l}
\Phi(n,p,\mu) \Leftrightarrow 
\max(\mu_{4},\theta_7)\le\mu\le \min(\mu_{2},\mu_{3}), \hbox{ if }  p\in(\frac{4}{9},n_1) \hbox{ or } p> n_2;\\
\Phi(n,p,\mu) \Leftrightarrow \max(\mu_{2},\mu_{4},\theta_7)\le\mu\le \mu_{3}, \hbox{ if } p\in(n_1,n_2);\\
\Phi(n,p,\mu) \Leftrightarrow \theta_7\le\mu\le\min(\mu_{2},\mu_{3}) 
\hbox{ or }
\max(\mu_{4},\theta_7)\le \mu\le\min(\mu_{1},\mu_{2}), \hbox{ if } p<\frac{4}{9}.
\end{array}
\end{equation}
According to the vales of $p$, we consider seven cases below.

{\bf Case 1.1:} $\theta_1< c_{1}\le0$ and $p>n_2$.
In this case, from \eqref{eq-m23} and \eqref{eq-m45},
we have $\mu_{2}\ge \mu_{3}$ and $\mu_{4}> \theta_7$.
From \eqref{eq-phic1}, we have $\Phi(n,p,\mu) \Leftrightarrow \mu_{4}\le \mu\le \mu_{3}$.

We now eliminate $c_1$ from  $\Phi(n,p,\mu) \Leftrightarrow \exists c_1(p>n_2\wedge \theta_1< c_{1}\le0\wedge \mu_{4}\le \mu\le \mu_{3})$.
By \eqref{eq-ec1}, $\mu_{4}\le \mu\le \mu_{3}$ is equivalent to
$\theta_5\le c_1\le\theta_4$. $\exists c_1 (\theta_5\le c_1 \le \theta_4  \wedge \theta_1< c_{1}\le 0$) is equivalent to ($\theta_4> \theta_1\wedge\theta_5\le0$).
Therefore, in this case
$\Phi(n,p,\mu) \Leftrightarrow (p>n_2 \wedge \theta_4> \theta_1\wedge \theta_5\le0)$,
and $\mathbb{T}(1,2)$ is proved.

{\bf Case 1.2:} $\theta_1< c_{1}\le0$ and $p=n_2$. When $p=n_2$, we have $\theta_1=\theta_3,\ s_2=-\frac{1}{1024}(7\sqrt{17}-33)(c_{1}\sqrt{17}+64n+9c_{1})n$. Then $s_2\ge0\Leftrightarrow c_1\ge \theta_3$. Because $\theta_1< c_{1}\le0$ and $p=n_2>\frac{4}{9}$, we have $s_3\ge0\Leftrightarrow\mu_{4}\le\mu\le\mu_{3}$. By \eqref{eq-m45}, we have $\mu_{4}>\theta_7$. When $p=n_2$, we have $\mu_{3}=\mu_{6}$ and $\mu_{4}=\mu_{7}$. Thus $\Phi(n,p,\mu) \Leftrightarrow (\theta_3< c_{1}\le0, \mu_{7}\le \mu\le \mu_{6})$.

We now eliminate $c_1$ from $\Phi(n,p,\mu) \Leftrightarrow \exists c_1(p=n_2\wedge \theta_3< c_{1}\le0\wedge \mu_{7}\le \mu\le \mu_{6})$. By \eqref{eq-ec1}, $\mu_{7}\le \mu\le \mu_{6}$ is equivalent to $\theta_{12}\le c_{1}\le\theta_6$. $\exists c_1 (\theta_{12}\le c_{1}\le\theta_6 \wedge \theta_3< c_{1}\le0)$ is equivalent to $\theta_6> \theta_3$ and $\theta_{12}\le0$. Therefore, in this case
 $\Phi(n,p,\mu) \Leftrightarrow (p=n_2 \wedge \theta_6> \theta_3 \wedge \theta_{12}\le0)$, and $\mathbb{T}(2,2)$ is proved.

{\bf Case 1.3:} $\theta_1< c_{1}\le0$ and $p\in(n_1,n_2)$, $p\neq1$.
Due to \eqref{eq-m45}, this case is divided into two sub-cases.

{\bf Case 1.3.1:} $\theta_1< c_{1}\le0$ and $p\in(\frac{13}{9},n_2)$. By \eqref{eq-m24} and \eqref{eq-m45}, we have $\mu_{4}>\mu_{2}$ and $\mu_{4}>\theta_7$. 
From \eqref{eq-phic1}, we have $\Phi(n,p,\mu) \Leftrightarrow \mu_{4}\le \mu\le \mu_{3}$.

We now eliminate $c_1$ from $\Phi(n,p,\mu) \Leftrightarrow \exists c_1 (p\in(\frac{13}{9},n_2)\wedge \theta_1< c_{1}\le0\wedge \mu_{4}\le \mu\le \mu_{3})$. Similar to  Case 1.1, we have
$\Phi(n,p,\mu) \Leftrightarrow
(p\in(\frac{13}{9},n_2) \wedge \theta_4> \theta_1\wedge \theta_5\le0)$,
$\mathbb{T}(3,2)$ is proved.

{\bf Case 1.3.2:} $\theta_1< c_{1}\le0$ and $p\in(n_1,\frac{13}{9}]$, $p\neq1$. By \eqref{eq-m24}, \eqref{eq-m35}, \eqref{eq-m45} and \eqref{eq-c1211}, we have $\mu_{4}\ge\mu_{2}$, ($\mu_{3}\ge\theta_7\Leftrightarrow c_1\le \eta_1$), ($\mu_{4}\ge\theta_7\Leftrightarrow c_{1}\le \theta_2$) and ($\theta_2> \theta_1\Leftrightarrow \phi_3$). Hence $\Phi(n,p,\mu) \Leftrightarrow \max(\mu_{4},\theta_7)\le \mu\le \mu_{3}$.
This case is further divided into two sub-cases.

{\bf Case 1.3.2.1:} If $c_{1}\le \theta_2$, then $\mu_{4}\ge\mu_{5}$, and $\Phi(n,p,\mu) \Leftrightarrow \mu_{4}\le \mu\le \mu_{3}$. So we need $\theta_1< \theta_2$, which yields $\phi_3$. From \eqref{eq-phic1}, we have $\Phi(n,p,\mu) \Leftrightarrow (\theta_1<c_1\le \theta_2, \phi_3, \mu_{4}\le \mu\le \mu_{3})$.

We now eliminate $c_1$ from $\Phi(n,p,\mu) \Leftrightarrow\exists c_1 (p\in(n_1,\frac{13}{9})\wedge p\neq1\wedge \phi_3\wedge \theta_1<c_1\le \theta_2\wedge \mu_{4}\le \mu\le \mu_{3})$. Like Case 1.1, we have
$\Phi(n,p,\mu) \Leftrightarrow
(p\in(n_1,\frac{13}{9})\wedge p\neq1 \wedge
\phi_3 \wedge \theta_4> \theta_1\wedge \theta_5\le \theta_2)$,
 and $\mathbb{T}(4,2)$ is proved

{\bf Case 1.3.2.2:} If $c_{1}\ge \theta_2$, then $\mu_{4}\le\theta_7$, and $\Phi(n,p,\mu) \Leftrightarrow \theta_7\le \mu\le \mu_{3}$. So we need $\theta_7\le\mu_{3}$, which yields $c_{1}\le \eta_1$. By \eqref{eq-c1114}, we know $\eta_1> \theta_1$ results $\phi_3$, which yields $\theta_1< \theta_2< \eta_1$.
From \eqref{eq-phic1}, we have $\Phi(n,p,\mu) \Leftrightarrow (\theta_2< c_{1}\le \min(0,\eta_1), \phi_3, \theta_7\le \mu\le \mu_{3})$.

We now eliminate $c_1$ from
$\Phi(n,p,\mu) \Leftrightarrow\exists c_1 (p\in(n_1,\frac{13}{9})\wedge
 p\neq1\wedge \phi_3\wedge \theta_2< c_{1}\le \min(0,\eta_1)\wedge \theta_7\le \mu\le \mu_{3})$. $\theta_7\le \mu\le \mu_{3}$ is equivalent to $\theta_7 \le \mu$ and $c_1 \le \theta_4$. $\exists c_1 (c_1 \le \theta_4  \wedge \theta_2< c_{1}\le \min(0,\eta_1)$)
is equivalent to $\theta_4 > \theta_2$. Therefore, in this case
 $\Phi(n,p,\mu) \Leftrightarrow
(p\in(n_1,\frac{13}{9})\wedge p\neq1\wedge
\phi_3 \wedge \theta_7 \le \mu \wedge \theta_4 > \theta_2)$,
and $\mathbb{T}(4,3)$ is proved.

{\bf Case 1.4:} $\theta_1< c_{1}\le0$ and $p=n_1$. When $p=n_1$, we have $\theta_1=\theta_{10},\ \theta_2=\theta_{11},\ \eta_1=\eta_2,\ s_2=-\frac{1}{1024}(33+7\sqrt{17})(c_{1}\sqrt{17}-64n-9c_{1})n$. Then $s_2\ge0\Leftrightarrow c_1\ge \theta_{10}$. Because $\theta_1< c_{1}\le0$ and $p=n_1>\frac{4}{9}$, we have $s_3\ge0\Leftrightarrow\mu_{4}\le\mu\le\mu_{3}$. By \eqref{eq-m45}, we have $\mu_{4}\ge\theta_7\Leftrightarrow c_{1}\le \theta_2$.
This case is divided into two sub-cases.

{\bf Case 1.4.1:} Similar to  Case 1.3.2.1, $\Phi(n,p,\mu) \Leftrightarrow (\theta_1<c_{1}\le \theta_2 \wedge \phi_3\wedge \mu_{4}\le \mu\le \mu_{3})$. When $p=n_1$, we have $\mu_{3}=\mu_{9},\ \mu_{4}=\mu_{10},\ \phi_3\Leftrightarrow n< n_3$.
From \eqref{eq-phic1}, we have $\Phi(n,p,\mu) \Leftrightarrow (\theta_{10}<c_{1}\le \theta_{11}, n< n_3, \mu_{10}\le \mu\le \mu_{9})$.

We now eliminate $c_1$ from
$\Phi(n,p,\mu) \Leftrightarrow \exists c_1(p=n_1\wedge \theta_{10}<c_{1}\le \theta_{11}\wedge n< n_3\wedge \mu_{10}\le \mu\le \mu_{9})$. $\mu_{10}\le \mu\le \mu_{9}$ is equivalent to $\theta_{14}\le c_{1}\le\theta_{13}$. $\exists c_1 (\theta_{14}\le c_{1}\le\theta_{13} \wedge \theta_{10}< c_{1}\le \theta_{11})$ is equivalent to $\theta_{13}> \theta_{10}$ and $\theta_{14}\le \theta_{11}$. Therefore, in this case
 $\Phi(n,p,\mu) \Leftrightarrow (p=n_1\wedge
n< n_3 \wedge \theta_{13}> \theta_{10} \wedge \theta_{14}\le \theta_{11})$, and $\mathbb{T}(5,2)$ is proved.

{\bf Case 1.4.2:} Similar to Case 1.3.2.2, $\Phi(n,p,\mu) \Leftrightarrow (\theta_2< c_{1}\le \min(0,\eta_1)\wedge \phi_3\wedge \mu_{5}\le \mu\le \mu_{3})$.
When $p=n_1$, we have $\theta_7=\theta_{16}$.
From \eqref{eq-phic1}, we have $\Phi(n,p,\mu) \Leftrightarrow (\theta_{11}< c_{1}\le \eta_2, n< n_3, \theta_{16}\le \mu\le \mu_{9})$.

We now eliminate $c_1$ from
 $\Phi(n,p,\mu) \Leftrightarrow \exists c_1(p=n_1\wedge 
 \theta_{11}< c_{1}\le \eta_2\wedge n< n_3\wedge \theta_{16}\le \mu\le \mu_{9})$. $\theta_{16}\le \mu\le \mu_{9}$ is equivalent to $c_{1}\le\theta_{13}$ and $\mu\ge \theta_{16}$. $\exists c_1 (c_{1}\le\theta_{13} \wedge \theta_{11}< c_{1}\le \eta_2)$ is equivalent to $\theta_{13}> \theta_{11}$. Therefore, in this case
 $\Phi(n,p,\mu) \Leftrightarrow
 (p=n_1\wedge
 n< n_3 \wedge \theta_{13}> \theta_{11} \wedge \mu\ge \theta_{16})$, and $\mathbb{T}(5,3)$ is proved.

{\bf Case 1.5:} $\theta_1< c_{1}\le0$ and $p\in(\frac{4}{9},n_1)$. By \eqref{eq-m23} and \eqref{eq-m45}, we have $\mu_{2}>\mu_{3}$ and ($\mu_{4}\ge\theta_7\Leftrightarrow c_{1}\le \theta_2$). Hence $\Phi(n,p,\mu) \Leftrightarrow \max(\mu_{4},\theta_7)\le \mu\le \mu_{3}$.
This case is divided into two sub-cases.

{\bf Case 1.5.1:} Similar to  Case 1.3.2.1, we have $\Phi(n,p,\mu) \Leftrightarrow (\theta_1<c_{1}\le \theta_2\wedge \phi_3\wedge \mu_{4}\le \mu\le \mu_{3})$.

We now eliminate $c_1$ from $\Phi(n,p,\mu) \Leftrightarrow \exists c_1(p\in(\frac{4}{9},n_1)\wedge \phi_3\wedge \theta_1<c_{1}\le \theta_2\wedge \mu_{4}\le \mu\le \mu_{3})$. Like  Case 1.3.2.1,
we have $\Phi(n,p,\mu) \Leftrightarrow
(p\in(\frac{4}{9},n_1) \wedge
\phi_3 \wedge \theta_4> \theta_1\wedge \theta_5\le \theta_2)$,
 and $\mathbb{T}(6,2)$ is proved

{\bf Case 1.5.2:} Similar to  Case 1.3.2.2, we have $\Phi(n,p,\mu) \Leftrightarrow (\theta_2< c_{1}\le \eta_1\wedge \phi_3\wedge \mu_{5}\le \mu\le \mu_{3})$.

We now eliminate $c_1$ from $\Phi(n,p,\mu) \Leftrightarrow\exists c_1 (p\in(\frac{4}{9},n_1)\wedge \phi_3\wedge \theta_2< c_{1}\le \eta_1\wedge \theta_7\le \mu\le \mu_{3})$.
Similar to  Case 1.3.2.2, we have
 $\Phi(n,p,\mu) \Leftrightarrow
 (p\in(\frac{4}{9},n_1)\wedge
 \phi_3 \wedge \theta_7 \le \mu \wedge \theta_4 > \theta_2)$,
 and $\mathbb{T}(6,3)$ is proved.

{\bf Case 1.6:} $\theta_1< c_{1}\le0$ and $p=\frac{4}{9}$. When $p=\frac{4}{9}$, we have $\theta_1=\theta_9,\ \eta_1=\eta_3,\ \theta_7=\theta_8,\ s_3=-\frac{4n}{6561}(162n+25c_{1})(45n^2+81nu+25c_{1})$. Then $s_3\ge0\Leftrightarrow \mu\le\mu_{13}$ if $c_{1}\ge \theta_9$. By \eqref{eq-m35}, we know $\mu_{3}\ge\theta_7\Leftrightarrow c_1\le \eta_1$. And $\theta_9< \eta_3\Leftrightarrow n=1$. 
From \eqref{eq-phic1}, we have $\Phi(n,p,\mu) \Leftrightarrow (\theta_9<c_{1}\le \eta_3, n=1, \theta_8\le \mu\le \mu_{13})$.

We now eliminate $c_1$ from
 $\Phi(n,p,\mu) \Leftrightarrow\exists c_1 
 (p=\frac{4}{9}\wedge \theta_9<c_{1}\le \eta_3\wedge n=1\wedge \theta_8\le \mu\le \mu_{13})$. $\theta_8\le \mu\le \mu_{13}$ is equivalent to $c_{1}\le\theta_{15}$ and $\mu\ge\theta_8$. $\exists c_1 (c_{1}\le\theta_{15} \wedge \theta_9<c_{1}\le \eta_3)$ is equivalent to $\theta_{15}>\theta_9$. Therefore, in this case
 $\Phi(n,p,\mu) \Leftrightarrow
 (p=\frac{4}{9} \wedge
 n=1 \wedge \theta_{15}> \theta_9 \wedge \mu\ge \theta_8)$,
 and $\mathbb{T}(7,2)$ is proved

{\bf Case 1.7:} $\theta_1< c_{1}\le0$ and $0<p<\frac{4}{9}$.
This case is divided into two sub-cases.

{\bf Case 1.7.1:} If we select $\theta_7\le\mu\le\min(\mu_{2},\mu_{3})$, by \eqref{eq-m23}, we have $\mu_{2}>\mu_{3}$. Thus $\Phi(n,p,\mu) \Leftrightarrow \theta_7\le \mu\le \mu_{3}$. So we need $\theta_7\le\mu_{3}$, which yields $c_{1}\le \eta_1$. By \eqref{eq-c1114}, we know $\eta_1> \theta_1$ results $\phi_3$, which yields $n=1$ with $0<p<\frac{4}{9}$. 
From \eqref{eq-phic1}, we have $\Phi(n,p,\mu) \Leftrightarrow (\theta_1< c_{1}\le \eta_1, n=1, \theta_7\le \mu\le \mu_{3})$.

We now eliminate $c_1$ from
$\Phi(n,p,\mu) \Leftrightarrow \exists c_1(p\in(0,\frac{4}{9})\wedge n=1\wedge \theta_1< c_{1}\le \eta_1\wedge \theta_7\le \mu\le \mu_{3})$. Similar to Case 1.3.2.2, we have $\Phi(n,p,\mu) \Leftrightarrow
(p\in(0,\frac{4}{9})\wedge
n=1 \wedge \theta_7 \le \mu \wedge \theta_4 > \theta_1)$,
and $\mathbb{T}(8,2)$ is proved.

{\bf Case 1.7.2:} If we select $\max(\mu_{4},\theta_7)\le \mu\le\min(\mu_{1},\mu_{2})$, by \eqref{eq-m14}, we have $\mu_{1}<\mu_{4}$, which yields contradiction.

{\bf Case 2:}  $c_{1}<\theta_1$.
From \eqref{eq-m13}, we have $\mu_{1}< \mu_{3}$ in this case and from \eqref{eq-phi0}, $\Phi(n,p,\mu)$ simplifies to one case:
$$\Phi(n,p,\mu) \Leftrightarrow\exists c_1
(\theta_7\le\mu\le\min(\mu_{1},\mu_{2},\mu_{4}), 
\hbox{ if } 0<p<\frac{4}{9} \hbox{ and }c_1< \theta_1).$$
Since $p$ satisfies $0<p<\frac{4}{9}$, we need only consider the following cases.

{\bf Case 2.1:} $c_1<\theta_1$ and $0<p<\frac{4}{9}$. By \eqref{eq-m14}, \eqref{eq-m24} and \eqref{eq-m45}, we have $\mu_{1}>\mu_{4}$, $\mu_{2}>\mu_{4}$ and ($\mu_{4}\ge\theta_7\Leftrightarrow c_1\ge \theta_2$). Then we need $\theta_2< \theta_1$, which yields $\phi_3$ by \eqref{eq-c1211}. Because $\phi_3$ means $n=1$ with $0<p<\frac{4}{9}$, we have $\Phi(n,p,\mu) \Leftrightarrow (\theta_2\le c_{1}< \theta_1, n=1, \theta_7\le \mu\le \mu_{4})$.

We now eliminate $c_1$ from
 $\Phi(n,p,\mu) \Leftrightarrow \exists c_1(p\in(0,\frac{4}{9})\wedge n=1\wedge \theta_2\le c_{1}< \theta_1\wedge \theta_7\le \mu\le \mu_{4})$. $\theta_7\le \mu\le \mu_{4}$ is equivalent to $c_{1}\ge\theta_5$ and $\mu\ge\theta_7$. $\exists c_1 (c_{1}\ge\theta_5 \wedge \theta_2\le c_{1}< \theta_1)$ is equivalent to $\theta_5<\theta_1$. Therefore, in this case
 $\Phi(n,p,\mu) \Leftrightarrow
(p\in(0,\frac{4}{9}) \wedge
n=1 \wedge \theta_5< \theta_1 \wedge \mu\ge \theta_7)$,
 and $\mathbb{T}(8,3)$ is proved

{\bf Case 2.2:} $c_{1}<\theta_1$ and $p=n_2$. In  Case 1.2, we know $\theta_1=\theta_3$ with $p=n_2$, and $s_2\ge0\Leftrightarrow c_1\ge \theta_3$, which yields contradiction.

{\bf Case 2.3:} $c_{1}<\theta_1$ and $p=n_1$. In  Case 1.4, we know $\theta_1=\theta_{10}$ with $p=n_1$, and $s_2\ge0\Leftrightarrow c_1\ge \theta_{10}$, which yields contradiction.

{\bf Case 2.4:} $c_{1}<\theta_1$ and $p=\frac{4}{9}$. We have $\theta_1=\theta_9,\ \mu_{2}=\mu_{12},\ \mu_{3}=\mu_{13}$ based on $p=\frac{4}{9}$. Then we have ($s_2\ge0\Leftrightarrow \mu\le\mu_{12}$) and ($s_3\ge0\Leftrightarrow \mu\ge\mu_{13}$ if $c_{1}\le \theta_9$). So we need $\mu_{12}\ge \mu_{13}$. By \eqref{eq-m23}, we have $\mu_{12}<\mu_{13}$, which yields contradiction.

{\bf Case 3:} $c_{1}=\theta_1$. When $c_{1}=\theta_1$, we have $s_1=n(np-n-\mu+2)$, $s_2=n(4p^2-9p+4)(np-n-\mu+2)$ and $s_3=-n^2(9p-4)(np-n-\mu+2)^2$. Thus, $s_1\ge0\Leftrightarrow \mu\le 2+n(p-1)$ and $s_3\ge0\Leftrightarrow (p\le\frac{4}{9}\ {\rm or}\ \mu=2+n(p-1))$.
This case is divided into two sub-cases.

{\bf Case 3.1:} If $\mu=2+n(p-1)$, then $s_1=s_2=s_3=0$. And $p\ge1-\frac{\mu}{n}\Leftrightarrow \phi_1$. Thus $\Phi(n,p,\mu) \Leftrightarrow (c_{1}= \theta_1\wedge\phi_1\wedge\phi_2)$,
and $\mathbb{T}(i,4),\ i=1,\ldots,6$ are proved.

{\bf Case 3.2:} If $p\le\frac{4}{9}$, then $s_2\ge0\Leftrightarrow \mu\le 2+n(p-1)$. Then we need $2+n(p-1)\ge\mu_{5}$, which yields $\phi_1$. And $\phi_1$ implies $n=1$ with $p\le\frac{4}{9}$. Thus $\Phi(n,p,\mu) \Leftrightarrow (c_{1}= \theta_1\wedge n=1\wedge-p\le\mu-1\le p)$,
and
$\mathbb{T}(7,4)$, $\mathbb{T}(8,4)$ are proved.

\section{Conclusion}
\label{sec-conc}

This paper is an extension  of the work \cite{GYG2020,GYG2020M}
to the case where the entropy power involves parameters.
The basic idea is to prove entropy power inequalities in a systematical way.
Precisely, the concavity of R\'{e}nyi entropy power is considered,
where the probability density $u_t$ solve
the nonlinear heat equation with two parameters $p$ and $\mu$.
Our procedure reduces the proof of the CREP to check the
semi-positiveness of a quadratic form \eqref{eq-A1} whose coefficients are polynomials in the parameters $n,p,\mu$.
%
In principle, a necessary and sufficient condition on parameters $n,p,\mu$ for
this can be computed with the quantifier elimination~\cite{QE}.

Based on the above method, we give a sufficient condition $\Phi(n,p,\mu)$ for CREP, which extends the parameter's range of the CREP given by Savar\'{e}-Toscani~\cite{Savare2014}.
In fact, our results give the necessary and sufficient condition for CREP under certain conditions.
But in the general case,  Theorem \ref{th-m1} only gives a sufficient condition
for the following reasons:
1. we use inequalities \eqref{T2} in step \eqref{T33};
2. there might exist more  constraints;
3. {\bf Problem II} may not equivalent to {\bf Problem III}.

\section*{Acknowledgments}

This work is partially supported by NSFC 11688101 and NKRDP 2018YFA0704705, Beijing Natural Science Foundation (No. Z190004),
and China Postdoctoral Science Foundation (No. 2019TQ0343, 2019M660830).


\section*{Appendix. Constraints in \eqref{eq-2cons}}
In this appendix, we give the constraints in \eqref{eq-2cons}, where
$u_{h_1,h_2}= \frac{\partial^{h_1+h_2} u}{\partial^{h_1} x_{a,t}\partial^{h_2} x_{b,t}}$.

{\parskip=3pt
$R_{1,a,b}=3pu_{1,0}^4+3uu_{2,0}u_{1,0}^2-5u_{1,0}^4,$

$R_{2,a,b}=3pu_{0,1}^4+3uu_{0,2}u_{0,1}^2-5u_{0,1}^4,$

$R_{3,a,b}=3pu^2u_{1,0}u_{3,0}+u^3u_{4,0}-3u^2u_{1,0}u_{3,0},$

$R_{4,a,b}=3pu^2u_{0,1}u_{3,0}+u^3u_{3,1}-3u^2u_{0,1}u_{3,0},$

$R_{5,a,b}=3pu^2u_{1,0}u_{2,1}+u^3u_{3,1}-3u^2u_{1,0}u_{2,1},$

$R_{6,a,b}=3pu^2u_{0,1}u_{2,1}+u^3u_{2,2}-3u^2u_{0,1}u_{2,1},$

$R_{7,a,b}=3pu^2u_{1,0}u_{1,2}+u^3u_{2,2}-3u^2u_{1,0}u_{1,2},$

$R_{8,a,b}=3pu^2u_{0,1}u_{1,2}+u^3u_{1,3}-3u^2u_{0,1}u_{1,2},$

$R_{9,a,b}=3pu^2u_{1,0}u_{0,3}+u^3u_{1,3}-3u^2u_{1,0}u_{0,3},$

$R_{10,a,b}=3pu^2u_{0,1}u_{0,3}+u^3u_{0,4}-3u^2u_{0,1}u_{0,3},$

$R_{11,a,b}=3pu_{0,1}u_{1,0}^3+3uu_{1,1}u_{1,0}^2-5u_{1,0}^3u_{0,1},$

$R_{12,a,b}=3pu_{0,1}^3u_{1,0}+3uu_{1,1}u_{0,1}^2-5u_{0,1}^3u_{1,0},$

$R_{13,a,b}=3pu_{0,1}^2u_{1,0}^2+2uu_{0,1}u_{1,0}u_{1,1}+uu_{1,0}^2u_{0,2}-5u_{1,0}^2u_{0,1}^2,$

$R_{14,a,b}=3puu_{1,0}^2u_{2,0}+u^2u_{1,0}u_{3,0}+u^2u_{2,0}^2-4uu_{2,0}u_{1,0}^2,$

$R_{15,a,b}=3puu_{0,1}u_{1,0}u_{2,0}+u^2u_{0,1}u_{3,0}+u^2u_{1,1}u_{2,0}-4uu_{2,0}u_{1,0}u_{0,1},$

$R_{16,a,b}=3puu_{1,0}^2u_{1,1}+u^2u_{1,0}u_{2,1}+u^2u_{1,1}u_{2,0}-4uu_{1,1}u_{1,0}^2,$

$R_{17,a,b}=3puu_{0,1}u_{1,0}u_{2,0}+u^2u_{1,0}u_{2,1}+u^2u_{1,1}u_{2,0}-4uu_{2,0}u_{1,0}u_{0,1},$

$R_{18,a,b}=3puu_{0,1}^2u_{2,0}+u^2u_{0,1}u_{2,1}+u^2u_{0,2}u_{2,0}-4uu_{0,1}^2u_{2,0},$

$R_{19,a,b}=3puu_{0,1}u_{1,0}u_{1,1}+u^2u_{0,1}u_{2,1}+u^2u_{1,1}^2-4uu_{0,1}u_{1,0}u_{1,1},$

$R_{20,a,b}=3puu_{1,0}^2u_{0,2}+u^2u_{1,0}u_{1,2}+u^2u_{0,2}u_{2,0}-4uu_{1,0}^2u_{0,2},$

$R_{21,a,b}=3puu_{0,1}u_{1,0}u_{1,1}+u^2u_{1,0}u_{1,2}+u^2u_{1,1}^2-4uu_{0,1}u_{1,0}u_{1,1},$

$R_{22,a,b}=3puu_{0,1}^2u_{1,1}+u^2u_{0,1}u_{1,2}+u^2u_{0,2}u_{1,1}-4uu_{1,1}u_{0,1}^2,$

$R_{23,a,b}=3puu_{0,1}u_{1,0}u_{0,2}+u^2u_{0,1}u_{1,2}+u^2u_{0,2}u_{1,1}-4uu_{0,2}u_{1,0}u_{0,1},$

$R_{24,a,b}=3puu_{0,1}u_{1,0}u_{0,2}+u^2u_{1,0}u_{0,3}+u^2u_{0,2}u_{1,1}-4uu_{0,2}u_{1,0}u_{0,1},$

$R_{25,a,b}=3puu_{0,1}^2u_{0,2}+u^2u_{0,1}u_{0,3}+u^2u_{0,2}^2-4uu_{0,2}u_{0,1}^2,$

$R_{26,a,b}=3pu_{0,1}u_{1,0}^3+2uu_{2,0}u_{1,0}u_{0,1}+uu_{1,1}u_{1,0}^2-5u_{1,0}^3u_{0,1},$

$R_{27,a,b}=3pu_{0,1}^2u_{1,0}^2+uu_{0,1}^2u_{2,0}+2uu_{0,1}u_{1,0}u_{1,1}-5u_{1,0}^2u_{0,1}^2,$

$R_{28,a,b}=3pu_{0,1}^3u_{1,0}+uu_{1,1}u_{0,1}^2+2uu_{0,2}u_{1,0}u_{0,1}-5u_{0,1}^3u_{1,0}.$
}

\end{document}